\pgfplotsset{compat=1.18}
\newcommand{\myquot}[1]{``#1''}
\newcommand{\nats}{\mathbb{N}}
\newcommand{\reals}{\mathbb{R}}
\newcommand{\nnreals}{\reals_{\geq 0}}
\newcommand{\rats}{\mathbb{Q}}
\newcommand{\nnrats}{\rats_{\geq 0}}
\newcommand{\auta}{\mathcal{A}}
\newcommand{\aut}{\mathcal{B}}
\newcommand{\lang}[1]{L(#1)}
\newcommand{\infset}[1]{\mathrm{Inf}(#1)}
\newcommand{\transition}[1]{\xrightarrow{(\sigma_{#1},\tau_{#1})}}
\newcommand{\transitiondelta}[1]{\xrightarrow{(\sigma_{#1},\tau_{#1}+\delta)}}
\newcommand{\transitiondeltam}[1]{\xrightarrow{(\sigma_{#1},\tau_{#1}-\delta)}}
\newcommand{\labeledtransition}[1]{\xrightarrow{#1}}
\newcommand{\nonempty}[1]{S_{#1}^{\textit{ne}}}
\newcommand{\terminal}[2]{\mathcal{T}_{#1}(#2)}
\newcommand{\tba}{TBA\xspace}
\newcommand{\mitl}{MITL\xspace}
\newcommand{\true}{\texttt{true}}
\newcommand{\false}{\texttt{false}}
\newcommand{\shortsim}{%
  \settowidth{\@tempdima}{n}
  \resizebox{\@tempdima}{\height}{$\sim$}%
}
\newcommand{\TSigma}{T\Sigma}
\newcommand{\pow}[1]{2^{#1}}
\newcommand{\size}[1]{|#1|}
\newcommand{\bools}{\mathbb{B}}
\newcommand{\assumpverdicts}{\bools_4}
\newcommand{\unknown}{\textbf{\textit{?}}}
\newcommand{\outofmodel}{\times}
\newcommand{\verdle}{\preccurlyeq}
\newcommand{\verdge}{\succcurlyeq}
\newcommand{\assumpevalfuncsymbol}{\mathcal{V}\xspace}
\newcommand{\assumpevalfunc}[2]{\assumpevalfuncsymbol(#1)(#2)}
\newcommand{\conswords}[2]{\mathcal{C}_{#1}(#2)}
\newcommand{\obs}{o}
\newcommand{\post}{\mathrm{Post}}
\newcommand{\succc}{\mathrm{Succ}}
\newcommand{\multbound}{e}
\newtheorem{definition}{Definition}
\newtheorem{example}{Example}
\newtheorem{theorem}{Theorem}
\newtheorem{lemma}{Lemma}
\newtheorem{remark}{Remark}
\newtheorem{proposition}{Proposition}
\begin{document}

\title{Exploiting Assumptions for Effective Monitoring of Real-Time Properties under Partial Observability}

\author[1]{\fnm{Alessandro} \sur{Cimatti}}\email{cimatti@fbk.eu}
\author*[2]{\fnm{Thomas M.} \sur{Grosen}}\email{tmgr@cs.aau.dk}
\author[2]{\fnm{Kim G.} \sur{Larsen}}\email{kgl@cs.aau.dk}
\author[1]{\fnm{Stefano} \sur{Tonetta}}\email{tonettas@fbk.eu}
\author[2]{\fnm{Martin} \sur{Zimmermann}}\email{mzi@cs.aau.dk}

\affil[1]{\orgname{Fondazione Bruno Kessler}, \orgaddress{\city{Trento}, \country{Italy}}}

\affil[2]{\orgname{Aalborg University}, \orgaddress{\city{Aalborg}, \country{Denmark}}}


\abstract{
Runtime verification of temporal properties is essential for ensuring
the correctness and reliability of real-time systems, particularly in
cyber-physical systems. A significant challenge in this domain is the
effective prediction of property failure or success, especially when
dealing with partially observable systems. This paper addresses these
challenges by developing an Assumption-Based Runtime Verification
(ABRV) approach for a continuous real-time setting. Our method
exploits assumptions about the system's behavior, specified as Timed
Automata, to enable monitors to predict future outcomes and handle
unobservable system parts, such as internal faults. Properties to be
monitored are specified using Metric Interval Temporal Logic
(MITL). The approach also includes formalizing observations with data
and time uncertainty using sequences of timed constraints.  We present
a zone-based online algorithm for computing the monitoring verdict,
implemented on top of the UPPAAL tool. Experimental evaluation on
proof-of-concept cases demonstrates the approach's feasibility and
effectiveness, illustrating how assumptions facilitate earlier
verdicts, enable monitoring of properties dependent on unobservable
events, and provide insights into scalability.
}

\keywords{Assumption-based Runtime Verification, Real-Time, MITL, Timed Automata}

\maketitle

\section{Introduction}
\label{sec:intro}

The problem of monitoring timed properties has gained significant
attention due to its crucial role in ensuring the correctness and
reliability of real-time systems. The runtime verification of temporal
properties over timed sequences of observations is crucial in various
applications ranging from cyber-physical systems including autonomous
vehicles and beyond. While different solutions for
runtime verification of timed temporal properties have been presented~\cite{bauer2006monitoring,basin2012algorithms,baldor2013monitoring,ho2014online,GrosenKLZ22},
some challenges remain to be addressed, in particular extending these solutions
with prognosis and diagnosis capabilities. More specifically, we are here
interested in effectively predicting in advance the failure of
properties and in handling partially observable systems.

In the discrete-time setting, these challenges have been addressed
with As\-sump\-tion-Based Runtime Verification
(ABRV)~\cite{Tian:2019a,Tian:2019b,CimattiTT21,CimattiTT22}.  ABRV uses
assumptions about the behavior of the system to predict the future
behavior of the system and to relate observable and non-observable
variables. These assumptions can be derived, for example, from models
produced during the system design, or from the data collected from the
system in operation. Exploiting assumptions, the monitor can
anticipate the detection of property failures. Moreover, the
specification is no more limited to the interface of black box systems
as in traditional runtime verification, but can be extended to
constrain also the internal non-observable parts (such as, for example,
internal faults).

Thus, in this setting, partial observability means that the monitor
may not receive information about the occurrence of some events (e.g.,
internal faults or signals not exposed at the system interface). The
knowledge on the system execution is therefore underspecified: an
observation constrains what must have happened, but leaves
unobservable events unconstrained. Assumptions are then used to
reconstruct and constrain the possible behaviors of the system
compatible with the observed events. The assumption alphabet includes
observable and unobservable events and, when combined with the
observation, generates all executions consistent with what was
seen. This allows the monitor to reason about properties that refer to
unobservable events and to issue definitive verdicts even when such
events were never directly observed.

In ABRV, the output of the monitor has four possible values:
\begin{itemize}
\item $\top$ (Satisfied): given the sequence of observations, the system satisfies
the specified temporal properties under the given assumption.
\item $\bot$ (Violated): this value indicates that the observed behavior of the
system violates the specified temporal property, under the given
assumption.
\item $\outofmodel$ (Out-of-model): the observed behavior
violates the assumptions, i.e., there is no run of the assumption compatible
with the observations.
\item $\unknown$ (Unknown): given the current observations and assumption, it
is not possible to determine definitively whether the property is
satisfied or violated. 
\end{itemize}

In this work, we enhance ABRV for real-time systems, where:
\begin{itemize}
  \item
    \textit{Properties} are expressed in Metric Interval Temporal
    Logic (MITL).
  \item
    \textit{Assumptions} are specified using Timed Automata.
  \item
    \textit{Observations} incorporate data and time uncertainty, captured via
    sequences of timed constraints.
\end{itemize}

We formally define the semantics of monitoring under assumptions in
this setting and prove that our monitoring function satisfies
desirable properties: \textit{impartiality}, once a definitive verdict
is issued, it remains valid for all future observation extensions;
\textit{anticipation}, a definitive verdict is issued as early as
possible when sufficient evidence exists; we introduce a partial order
on the four verdicts $\{\top,\bot, \outofmodel,\unknown\}$, formalizing
how assumptions and observations refine the verdict.

Like in the discrete-time case~\cite{CimattiTT22}, the assumption allows the monitor to
give a $\top$ or $\bot$ verdict even if the property contains future
operators and non-observable events. For example, suppose we monitor
the MITL property~$\varphi = F_{[0,10]}a \land G_{[0,20]}\neg b$ (expressing that there is an ``$a$'' in the first ten units of time, but no ``$b$'' in the first 20 units of time) and
we assume that the system satisfies the assumption~$\psi=G_{[0,1]}\neg b \wedge G(a \rightarrow
G_{[0,10]} \neg b)$ (expressing that there is no ``$b$'' in the first unit of time and no ``$a$'' is followed by a ``$b$'' within ten units of time). Then, the monitor can output a $\top$ verdict
even before time $20$, for instance at time $10$ when ``$b$'' is false in
the interval $[0,10]$ and ``$a$'' is true at time $10$. Further, it
can even give the verdict $\top$ if ``$b$'' is not observable, e.g.,
when ``$a$'' is true at time $0$ and $10$.

One of our main contributions is a rich definition of
observations that take into account both data and time uncertainty.
As in the discrete case~\cite{CimattiTT22}, the observations are represented by formulas that can
capture the uncertainty on data. For example, $\neg a$ means that $a$
is not seen but ``$b$'' can be true or false. The approach is further
extended to have uncertainty on time, taking into account potential
errors in the timestamps with which the monitor receives data from the
system. This is represented in the observations with time intervals
that are associated to observation formulas. Thus, for example, we can
say that ``$a$'' is seen in the interval~$[6,7]$ but we do not know exactly
when.
Also, we support multiplicities to express bounds on the number of occurrences of events.
Finally, we concatenate triples of formulas, time
intervals, and multiplicities to form complex sequences of observations. For example, the
sequence 
\[(a,[0,0],=1)\,(\neg a,[0,7],\ge 0)\,(a,[6,7],=1)\,(\neg
a,[6,16],\ge 0)\,(a,[15,16],=1)\]
says that we see three $a$'s, one at time
$0$, another in the interval $[6,7]$, and a final one in the interval
$[15,16]$ and that we do not know anything about ``$b$'' in between the three ``$a$'' (intuitively, an observation with an $=1$ ($\ge 0$) indicates exactly one occurrence (zero or more occurrences)). If the system
satisfies the assumption~$\psi$ from above, we can conclude at time $16$ that the property~$\varphi$ is true despite the uncertainty about time and $b$.

We propose a \textit{zone-based} \textit{online} algorithm that at any
time provides a monitoring verdict saying if the property is
satisfied or violated given the assumption and a sequence of
observations.
We implemented the algorithm on top of UPPAAL and show the
feasibility of the approach. Especially, we demonstrate how the
assumptions can be effective in anticipating the
satisfaction/violation of timed properties and in handling
properties that predicate over unobservable events.
We also report on the influence of unobservable events on the response time, the time it takes to compute a verdict when given a new observation.

This paper significantly extends the work presented in
\cite{CimattiGLTZ24} by providing more formal theoretical
underpinnings, e.g., the results in Subsection~\ref{sec:abstract} showing that our definition is anticipatory, impartial, and monotone in all parameters, as well as a new consistency check for observations (Lemma~\ref{lemma_obsemptinesstest}).
Furthermore, it enhances
the definition and handling of observations by incorporating a richer
set of multiplicities, i.e., from \myquot{exactly one} and \myquot{any number} to \myquot{exactly $k$}, \myquot{at least $k$}, and \myquot{at most $k$} for every $k \in \nats$. 
Among other things, this now allows to express upper bounds. 
Similarly, the original monitoring algorithm has been extended to support such
multiplicities.
Finally, we present an
additional case study in Subsection~\ref{sec:jobshop} along with more detailed experimental results
for the previous case studies.

The remainder of the paper is organized as
follows. Section~\ref{sec:prelim} introduces the necessary
preliminaries, including timed words, Timed Automata, and
MITL. Section \ref{sec:monitoringunderassumptions} defines the
monitoring problem under assumptions and presents the formal
semantics, including proofs of impartiality, anticipation, and the
partial order on verdicts. Section \ref{sec:impl} develops a zone-based algorithm
for monitoring under assumptions, handling uncertainty in both data
and time.  Section \ref{sec:eval} describes our implementation built on UPPAAL and
presents experimental results that demonstrate feasibility and the
benefits of assumption-based monitoring.  Section \ref{sec:relatedwork} discusses related
work, and Section \ref{sec:conc} concludes with future research directions.

\section{Preliminaries}
\label{sec:prelim}

The set of natural numbers (excluding zero) is $\nats$, we define~$\nats_0 = \nats \cup \{0\}$, the set of non-negative rational numbers is $\nnrats$, and the set of non-negative real numbers  is $\nnreals$. 
The powerset of a set $S$ is denoted by $\pow{S}$.

\subsection{Timed Words}
A timed word over a finite alphabet~$\Sigma$ is a pair $\rho = (\sigma, \tau)$ where $\sigma$ is a word over $\Sigma$ and $\tau$ is a sequence of non-decreasing non-negative real numbers of the same length as $\sigma$.
Timed words may be finite or infinite. In the latter case, we require $\limsup \tau = \infty$, i.e., time diverges. 
The set of finite timed words is denoted by $\TSigma^*$ and the set of infinite timed words by $\TSigma^\omega$.
We also represent a timed word as a sequence of pairs~$(\sigma_1,\tau_1) (\sigma_2,\tau_2) \ldots$.  If $\rho=(\sigma_1,\tau_1) (\sigma_2,\tau_2) \cdots (\sigma_n,\tau_n)$ is a finite timed word, we denote by $\tau(\rho)$ the total time duration of $\rho$, i.e., $\tau_n$ (with the convention~$\tau(\varepsilon) = 0$).
We lift this definition to languages~$L\subseteq \TSigma^*$ by defining~$\tau(L) = \sup_{\rho \in L}\tau(\rho)$, which can be infinite.

If $\rho_1=(\sigma^1_1,\tau^1_1)(\sigma_2^1,\tau_2^1)\ldots(\sigma^1_n,\tau^1_n)$ is a finite timed word, $\rho_2=(\sigma^2_1,\tau^2_1)(\sigma^2_2,\tau^2_2)\ldots$ a finite or infinite timed word, and $t \in \nnreals$ then the concatenation $\rho_1 \cdot_t \rho_2$ is defined if and only if $t \ge \tau(\rho_1)$. Then, we define~$\rho_1 \cdot_t \rho_2 = (\sigma_1,\tau_1)(\sigma_2,\tau_2)\cdots$ such that 
\[\sigma_i = \begin{cases}
    \sigma^1_i & \text{if } i\le n\\
    \sigma^2_{i-n} & \text{else}
\end{cases}\quad\text{and}\quad
\tau_i = \begin{cases}
    \tau^1_i & \text{if } i\le n\\
    \tau^2_{i-n}+t & \text{else}.
\end{cases}\]
We lift this definition to sets~$L_1 \subseteq \TSigma^*$ and $L_2 \subseteq \TSigma^* \cup \TSigma^\omega$ via
\[
L_1 \cdot_t L_2 = \{\rho_1 \cdot_t \rho_2 \mid \rho_1 \in L_1 \text{ and } \rho_2 \in L_2\},
\]
provided we have $t \ge \tau(L_1)$.

\begin{remark}
\label{remark_concatprops}
The following two properties, which follow directly from the definition of concatenation, will be useful later on.\footnote{Note that we could formulate the properties more generally by concatenating with languages other than $\TSigma^\omega$, but we refrain from doing so, as this is the result we need later on.}
Let $L,L' \subseteq \TSigma^*$.
\begin{enumerate}
    \item If $L \subseteq L'$ and $\tau(L), \tau(L') \le t$, then $L \cdot_t \TSigma^\omega \subseteq L' \cdot_t \TSigma^\omega$.
    \item If $\tau(L) \le t \le t'$, then $L \cdot_{t'} \TSigma^\omega \subseteq L \cdot_{t} \TSigma^\omega$.
\end{enumerate}
\end{remark}

\subsection{Timed Automata}

A timed Büchi automaton~(\tba)~$\aut = (Q, Q_0, \Sigma, C, \Delta, \mathcal{F})$ consists of a finite alphabet~$\Sigma$, a finite set~$Q$ of locations, a set~$Q_0 \subseteq Q$ of initial locations, a finite set~$C$ of clocks, a finite set~$\Delta \subseteq Q \times Q \times \Sigma \times \pow{C} \times G(C)$ of transitions with $G(C)$ being the set of clock constraints over $C$, and a set~$\mathcal{F} \subseteq Q$ of accepting locations.
A transition~$(q,q',a,\lambda, g)$ is an edge from $q$ to $q'$ on input symbol~$a$, where $\lambda$ is the set of clocks to reset and $g$ is a clock constraint over $C$.
A clock constraint is a conjunction of atomic constraints of the form~$c \sim n$, where $c$ is a clock, $n \in \nats_0$, and $\sim\ \in \{<, \leq, =, \geq, >\}$.

A state of $\aut$ is a pair $(q,v)$ where $q$ is a location in $Q$ and $v \colon C \rightarrow \nnreals$ is a valuation mapping clocks to their values.
For any $d \in \nnreals$, $v+d$ is the valuation~$x \mapsto v(x) + d$.
A run of $\aut$ from a state $(q_0, v_0)$  over a timed word $(\sigma_1, \tau_1)  (\sigma_2, \tau_2)\cdots$ is a sequence
\[(q_0, v_0) \transition{1} (q_1,v_1) \transition{2} (q_2,v_2) \transition{3} \cdots \] 
where for all $i \geq 1$ there is a transition $(q_{i-1},q_{i},\sigma_{i},\lambda_i,g_i)$ such that $v_{i}(c) = 0$ for all $c$ in $\lambda_i$ and $v_{i}(c) = v_{i-1}(c) + (\tau_i - \tau_{i-1})$ otherwise, and $g_i$ is satisfied by the valuation $v_{i-1}+(\tau_{i} - \tau_{i-1})$.
Here, we use $\tau_0 = 0$.
Given a run $r$, we denote the set of locations visited infinitely many times by $r$ as $\infset{r}$.
A run $r$ of $\aut$ is accepting if $\infset{r} \cap \mathcal{F} \neq \emptyset$.
The language of $\aut$ from a starting state~$(q,v)$, denoted $\lang{\aut,(q,v)}$, is the set of all timed words with an accepting run in $\aut$ starting from $(q,v)$. 
We define the language of $\aut$, written $\lang{\aut}$, to be $\bigcup_q \lang{\aut,(q,v_0)}$, where $q$ ranges over all locations in $Q_0$ and where $v_0(c) = 0$ for all $c \in C$.

\begin{proposition}[\cite{alur1994tba}]
\label{prop:intersection}
For all \tba $\aut$, $\aut'$ there is a \tba~$\aut \otimes \aut'$ with $\lang{\aut \otimes \aut'}=\lang{\aut}\cap\lang{\aut'}$. The set of states of $\aut \otimes \aut'$ is $Q \times Q' \times  \{0,1\}$, where $Q$ and $Q'$ are the sets of states of $\aut$ and $\aut'$, respectively.
\end{proposition}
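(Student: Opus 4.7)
The plan is to adapt the classical product construction for Büchi automata to the timed setting, with the extra binary flag serving to reduce the conjunction of two Büchi conditions to a single one. Before starting, I assume the clock sets $C$ and $C'$ of $\auta$ and $\auta'$ are disjoint; if they are not, I rename the clocks of $\auta'$ apart, which does not alter $\lang{\auta'}$. The product $\auta \otimes \auta'$ then uses the clock set $C \cup C'$, initial locations $Q_0 \times Q_0' \times \{0\}$, and accepting set $\mathcal{F}_\otimes = Q \times \mathcal{F}' \times \{1\}$.

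For the transitions, I take the synchronous product on input symbols: from $((q,q',i))$ the product has a transition on symbol~$a$ to $((p,p',j))$, with reset set $\lambda \cup \lambda'$ and guard $g \wedge g'$, whenever $(q,p,a,\lambda,g) \in \Delta$ and $(q',p',a,\lambda',g') \in \Delta'$. The flag update rule encodes a ``round-robin'' through the two acceptance sets: if $i=0$ and $p \in \mathcal{F}$ then $j=1$; if $i=1$ and $p' \in \mathcal{F}'$ then $j=0$; otherwise $j=i$. This is well defined because guards on $C$ and $C'$ are over disjoint clocks, so their conjunction behaves independently in the two components.

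The correctness argument has two directions, both of which proceed by projecting runs. For the inclusion $\lang{\auta} \cap \lang{\auta'} \supseteq \lang{\auta \otimes \auta'}$, given an accepting run of the product over some timed word~$\rho$, projecting out the $Q'$- and flag-components yields a run of $\auta$ on $\rho$, and symmetrically for $\auta'$; I then argue each projection is accepting. For the inclusion $\lang{\auta} \cap \lang{\auta'} \subseteq \lang{\auta \otimes \auta'}$, given accepting runs of $\auta$ and $\auta'$ on a common timed word~$\rho$, the synchronous combination together with the deterministic flag update yields a unique run of the product on $\rho$, and I must verify it is accepting. In both cases the timing discipline transfers directly because the product clock valuation on $C$ (respectively $C'$) evolves and is reset exactly as in $\auta$ (respectively $\auta'$).

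The main obstacle, and the only step that is not bookkeeping, is showing that visiting $\mathcal{F}_\otimes = Q \times \mathcal{F}' \times \{1\}$ infinitely often is equivalent to both projections being Büchi-accepting. The nontrivial direction is the product-to-components direction: if the product never visits a state with $q' \in \mathcal{F}'$ and flag~$1$ from some point onward, I argue by cases on whether the flag is eventually constant. If the flag remains $0$ forever, then no $\mathcal{F}$-state is ever entered in the $\auta$-projection, so $\auta$ is not accepting; if the flag remains $1$ forever, then no $\mathcal{F}'$-state is ever entered in the $\auta'$-projection; and if the flag alternates infinitely often, then by the update rule $\mathcal{F}_\otimes$ is visited infinitely often, contradicting the hypothesis. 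Conversely, if both projections visit their accepting sets infinitely often, the flag must toggle infinitely often, and each toggle from $1$ to $0$ passes through $\mathcal{F}_\otimes$. Finally, I observe the state-set size matches the claim, $|Q \times Q' \times \{0,1\}|$, completing the proof.
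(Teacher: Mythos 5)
The paper offers no proof of this proposition---it is cited from Alur and Dill---and your overall plan (synchronous product over the disjoint union of the clock sets, conjunction of guards, union of resets, plus a one-bit flag reducing two B\"uchi conditions to one) is exactly that standard construction. However, your instantiation contains a genuine error: the accepting set $\mathcal{F}_\otimes = Q \times \mathcal{F}' \times \{1\}$ is incompatible with your flag-update rule, and with it the inclusion $\lang{\aut} \cap \lang{\aut'} \subseteq \lang{\aut \otimes \aut'}$ fails. Under your (target-based) rule, a state $(p,p',1)$ with $p' \in \mathcal{F}'$ can only ever be entered from a flag-$0$ source, and only when simultaneously $p \in \mathcal{F}$: from a flag-$1$ source, a target with $p' \in \mathcal{F}'$ forces the new flag to $0$, and from a flag-$0$ source the new flag is $1$ only if $p \in \mathcal{F}$. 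So your accepting states certify \emph{simultaneous} visits to $\mathcal{F}$ and $\mathcal{F}'$, which two individually accepting runs need never produce. Concretely, take $\Sigma = \{a\}$ and two deterministic two-location cycles whose accepting locations are visited on even steps in one automaton and on odd steps in the other; both accept $(a,1)(a,2)\cdots$, but the unique product run alternates between a state of the form $(p,p',1)$ with $p' \notin \mathcal{F}'$ and one of the form $(p,p',0)$ with $p' \in \mathcal{F}'$, never meeting $Q \times \mathcal{F}' \times \{1\}$. Correspondingly, your closing claim that ``each toggle from $1$ to $0$ passes through $\mathcal{F}_\otimes$'' is false: the target of such a toggle has flag $0$, and its source need not have its second component in $\mathcal{F}'$.

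The repair is small and standard: either keep your update rule and take the accepting set to be $Q \times \mathcal{F}' \times \{0\}$ (the targets of the $1\to 0$ toggles) or $\mathcal{F} \times Q' \times \{1\}$ (the targets of the $0\to 1$ toggles), or make the flag update depend on the \emph{source} locations (set $j=0$ if and only if $i=1$ and $q' \in \mathcal{F}'$, and dually for $0\to 1$), in which case $Q \times \mathcal{F}' \times \{1\}$ is correct because a flag-$1$ phase persists exactly until the run sits in such a state. The remainder of your argument---renaming clocks apart, the independence of the conjoined guards, and the bijective correspondence between runs of the product and pairs of runs of the components over the same timed word---is sound and covers all the timed-automata-specific content the proposition requires.
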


\subsection{Logic}
We use Metric Temporal Interval Logic (\mitl) to formally express properties to be monitored in our examples. These can be translated into equivalent \tba which we use in our monitoring algorithm.

The syntax of \mitl formulas over a finite alphabet~$\Sigma$ is defined as
\[
\varphi ::= p \mid  \neg \varphi \mid  \varphi \vee \varphi \mid  X_I \varphi \mid  \varphi\ U_{I} \varphi ,
\]
where $p \in \Sigma$ and $I$ ranges over non-singular intervals over $\nnreals$ with endpoints in ${\nats_0 \cup \{\infty\}}$.
Note that we often write $\sim\,n$ for $I=\{d\in\reals\mid d\sim n\}$ where $\sim\ \in \{<,\leq,\geq,>\}$, and $n \in \nats$.
  We also define the standard syntactic sugar
$\true = p \vee \neg p$,
$\false = \neg \true$,
$\varphi \wedge \psi = \neg (\neg \varphi \vee \neg \psi)$,
$\varphi \rightarrow \psi = \neg \varphi \vee \psi$,
$F_{I} \varphi = \true\ U_{I} \varphi$, 
and $G_{I} \varphi = \neg F_{I} \neg \varphi$.
Finally, we often omit the interval~$[0,\infty]$ for readability.

The semantics of \mitl is defined over infinite timed words.
Given such a timed word $\rho = (\sigma_1,\tau_1)(\sigma_2,\tau_2)\cdots \in \TSigma^\omega$, a position $i \geq 1$, and an \mitl formula~$\varphi$, we inductively define the satisfaction relation $\rho, i \models \varphi$ as  follows:
\begin{itemize}
    \item $\rho,i \models p $  if and only if $p = \sigma_i$.
    \item $\rho,i \models \neg \varphi$ if and only if $\rho,i \not\models \varphi$.
    \item $\rho,i \models \varphi \vee \psi$  if $\rho,i \models \varphi$ or $\rho,i \models \psi$.
    \item $\rho,i \models X_I \varphi$ if and only if $\rho,(i+1) \models \varphi$ and $\tau_{i+1} - \tau_i \in I$.
    \item $\rho,i \models \varphi\ U_{I} \psi$ if and only if  there exists $k \geq i$ such that $\rho,k \models \psi$, $\tau_k - \tau_{i} \in I$, and $\rho,j \models \varphi$ for all $i \leq j < k$. 
\end{itemize}
  We write $\rho \models \varphi$ whenever $\rho, 1 \models \varphi$, and say that $\rho$ satisfies $\varphi$.
The language $\lang{\varphi}$ of an \mitl formula~$\varphi$ is the set of all infinite timed words that satisfy~$\varphi$.

\begin{theorem}[\cite{alur1996mitl,brihaye2017mightyl}]
\label{thm:mtltba}
For each \mitl formula $\varphi$ there exists a \tba $\aut$ such that $\lang{\varphi} = \lang{\aut}$.
\end{theorem}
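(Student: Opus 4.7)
The plan is to prove the theorem by structural induction on $\varphi$, exploiting closure properties of \tba at each stage. For the atomic case $\varphi = p$, a trivial two-location \tba accepts exactly the timed words whose first symbol is $p$. For disjunction $\varphi \vee \psi$, I would apply the standard disjoint-union construction (nondeterministic choice between initial locations) to the automata $\mathcal{A}_\varphi$ and $\mathcal{A}_\psi$ obtained from the inductive hypothesis.

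Negation is delicate because \tba are not closed under complementation in general. Rather than attempting a direct automata-theoretic complementation, I would first push negations inward using standard dualities so that $\neg$ occurs only in front of atomic propositions; this introduces a dual temporal operator to $U_I$ (a timed release-style operator), which I would add to the grammar as a primitive and build a \tba for directly. The remaining next case $X_I \varphi$ is then handled by prepending a fresh initial location to $\mathcal{A}_\varphi$, equipped with a fresh clock reset at the first symbol and a guard $c \in I$ on the transition into $\mathcal{A}_\varphi$'s initial location.

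The main obstacle is the bounded until $\varphi\ U_I \psi$ for a non-singular interval $I$. At each position one must spawn an ``obligation'' to witness a later position within time $I$ at which $\psi$ holds while $\varphi$ persists meanwhile. A naive construction spawns a distinct clock per position and hence cannot be realized by a \tba. The crucial observation, exploited in both the original Alur--Feder--Henzinger construction and the more recent MightyL translation via one-clock alternating timed automata followed by alternation elimination, is that non-singularity of $I$ allows grouping obligations born within a sufficiently short time window into a single ``summary'' whose bookkeeping requires only finitely many clocks, with the clock count bounded by a function of the endpoints of $I$. I would construct such a gadget, use Büchi acceptance to enforce that no pending obligation expires unfulfilled, and finally form the intersection with the \tba for $\varphi$ and $\psi$ via Proposition~\ref{prop:intersection}. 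The technically hardest step is establishing correctness of the finite-clock summary: showing that the resulting \tba captures $\lang{\varphi\ U_I \psi}$ exactly, without over- or under-approximation, and that the construction degrades gracefully as $I$ approaches a singular interval (where the classical counterexamples to translatability live).
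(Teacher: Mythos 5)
The paper does not prove this statement at all: it is imported as a known result, cited to Alur--Feder--Henzinger and to the MightyL translation, and merely illustrated by Example~\ref{ex:mitl2automaton}. So there is no in-paper proof to compare against; the relevant comparison is with the cited constructions. Your roadmap is faithful to them: negation normal form with a dual (release-style) operator to sidestep non-closure of \tba under complement, a compositional treatment of the Boolean and next cases, and the correct identification of $\varphi\ U_I \psi$ with non-singular $I$ as the crux, resolved by merging obligations born within a window of length $|I|$ so that only a number of clocks bounded in terms of the endpoints of $I$ is needed, with the B\"uchi condition ruling out forever-pending obligations. That is exactly the architecture of the original construction, and the alternating-automata route you mention is the MightyL variant.

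That said, as a proof your text has a genuine gap rather than merely a different route: the entire mathematical content of the theorem lives in the ``summary'' gadget that you only promise to construct. You do not specify the gadget's state, how obligations are merged and retired, how many clocks suffice, or why the merged bookkeeping neither over- nor under-approximates $\lang{\varphi\ U_I \psi}$ --- and this correctness argument is what occupies most of the cited papers. Two smaller points: pushing negation inward also dualizes $X_I$, so you need a gadget for that dual as well, not only for release; and describing the final step as ``intersection with the \tba for $\varphi$ and $\psi$'' is loose, since the until-automaton must be synchronized with the positions at which the subformulas hold (typically by treating subformulas as fresh letters/propositions before intersecting), which is not literally Proposition~\ref{prop:intersection} as stated over a fixed alphabet. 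Finally, ``degrades gracefully as $I$ approaches a singular interval'' understates the situation: the clock bound blows up as $|I|\to 0$, and at singularity no \tba exists at all, which is why non-singularity is a hypothesis and not a convenience.
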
  

\newcommand{\initsymb}{s}
\begin{example}\label{ex:mitl2automaton}
Figure~\ref{fig:aut-example} illustrates the above theorem by providing \tba for the formula~$\initsymb \land F_{[0,10]}a \land G_{[0,20]}\neg b$ and its negation. 
The formula requires the first symbol to be an ``$\initsymb$'', that an $a$ occurs within ten time units after the first symbol, and that no $b$ occurs within twenty time units after the first symbol, i.e., the ``$\initsymb$'' marks the initial timestamp of the processed word. 
\end{example}

\begin{figure}[bt]
    \centering
    \begin{tikzpicture} [node distance = 2.3cm,thick,>=stealth]
    \node (q0)     [state, initial text={}]          {$q_0$};
    \node (q1) [state, below = 1.25 of q0] {$q_1$};
    \node (i)   at (1,0) {};
    \node (q2)     [state, right = of q1]    {$q_2$};
    \node (phi)    [state, right = of q2]    {$\varphi$};
    \node (notphi) [state, left = of q1]    {$\neg\varphi$};
    
    \draw[->, thick] (i) edge (q0);
    \draw[->, thick] (q0) edge node[right, align=center] {$\initsymb$\\ $x :=  0$} (q1);
    \draw[->, thick] (q0) edge[bend right] node[above,xshift=-3pt] {$a,b$} (notphi);
     \draw[->, thick] (q1) edge node[above] {$a$} node[below]{$x \leq 10$} (q2);
     \draw[->, thick] (q2) edge node[above] {$a, b$} node[below]{$x > 20$} (phi);

     \draw[->, thick] (q1) edge[bend right=35] node[above]{$b$} (notphi);
     \draw[->, thick] (q1) edge[] node[above]{$a$} node[below]{$x > 10$} (notphi);

     \draw[->, thick] (q2) edge[bend left=45] node[above,near start]{$b$}  node[below,near start,yshift=-3pt]{$x\leq20$} (notphi);
    
    \draw[->, thick] (phi) edge [loop below] node[] {$a,b,\initsymb$} (phi);
    \draw[->, thick] (q2) edge [loop above] node[above,align=center] {$a$\\ $x \leq 20$} (q2);
    \draw[->, thick] (q2) edge [loop below] node[below,align=center] {$\initsymb$} (q2);
    \draw[->, thick] (q1) edge [loop below] node[below,align=center] {$\initsymb$} (q1);
    \draw[->, thick] (notphi) edge[loop below] node[]{$a,b,\initsymb$} (notphi);

\end{tikzpicture}
    \caption{A \tba for the languages of the formula~$\varphi = \initsymb \land F_{[0,10]}a \land G_{[0,20]}\neg b$ and its negation: If location~$\varphi$ ($\neg \varphi$) is accepting then it accepts $L(\varphi)$ ($L(\neg \varphi)$).} 
    \label{fig:aut-example}
\end{figure}

In the following, for the sake of readability, we use $\varphi$ also to denote properties, i.e., subsets of $\TSigma^\omega$, and use \mitl only to specify properties in examples.

\section{Monitoring under Assumptions}
\label{sec:monitoringunderassumptions}

Monitoring timed properties~\cite{bauer2006monitoring,GrosenKLZ22} requires to determine whether every extension of a finite observation (a finite timed word) satisfies a given property (yielding the verdict~$\top$), whether every extension violates the  property (yielding the verdict~$\bot$), or neither is true (yielding the verdict~$\unknown$).
In this section we introduce monitoring of real-time properties under assumptions and partial observability.

We begin in Subsection~\ref{sec:motivation} with an informal description of monitoring of real-time properties under assumptions and partial observability.
Then in Subsection~\ref{sec:abstract} we present an abstract definition capturing this description, and prove some basic properties about it. This abstract definition employs arbitrary timed languages for observations, assumptions, and specifications, and is therefore not algorithmically feasible.
Finally in Subsection~\ref{sec:concrete} we present an effective concretisation of the abstract definition, using timed automata and finite expressions to represent assumptions, specifications, and observations, respectively.

\subsection{Motivation}\label{sec:motivation}
Monitoring under assumptions involves two changes over the classical monitoring framework.

Firstly, the assumption itself: In its most general form, it is a set~$A \subseteq \TSigma^\omega$ of infinite timed words. Intuitively, $A$ contains the executions we assume to be possibly generated by the system we are monitoring. Hence, every execution that is not in $A$ does not need to be taken into account when determining a verdict, i.e., the assumption refines verdicts.
However, this also means that our assumption can be invalidated if we observe an execution prefix that is not consistent with our assumption. This requires a new verdict, denoted by $\outofmodel$. In this case, the assumption needs to be refined as it does not match our observation.

Secondly, we allow inexact observations: In the classical setting, we observe a finite timed word $(\sigma_1, \tau_1) \cdots (\sigma_n, \tau_n)$ and reason about its possible extensions. Hence, we implicitly presume that no other events occurred between time~$0$ and $\tau_n$ and that the time points are exact. 
In the following, we allow for some imperfect information about the observation. In the most general form, an observation is then a set~$O \subseteq \TSigma^*$ of finite timed words. Intuitively, $O$ contains those words that are consistent with our (imperfect) observation.

\begin{example}\label{ex:monitoring}
Consider the property ``$\initsymb  \land F_{[0,10]}a \land G_{[0,20]}\neg b$'' of
Example~\ref{ex:mitl2automaton}. Monitoring this property on a timed
word starting with ``$\initsymb$'' at time~$0$ provides a conclusive verdict in the following cases:
\begin{itemize}
\item
  The property is false at any time in the interval~$[0,20]$ if a ``$b$'' is observed.
\item
  The property is false after time~$10$ if ``$a$'' was not previously observed.
\item
  The property is true after time~$20$ if ``$b$'' was not previously  observed
  and ``$a$'' was observed in the interval
  $[0,10]$.
\end{itemize}
Note that a positive verdict can only given after time point~$20$, as any ``$b$'' before or at time point~$20$ implies that the property is violated.

Consider now the assumption ``$ \initsymb \land X G\neg\initsymb \land G_{[0,1]}\neg b \wedge G(a \rightarrow G_{[0,10]} \neg b)$'' which
corresponds to the \tba in Figure~\ref{fig:aut-example2}. Then,
if ``$\initsymb$'' was observed at time~$0$ and ``$a$'' was observed in the interval $[0,10]$, as
soon as we see another ``$a$'' at some time point $t \in [10,20]$ and have not observed a ``$b$'' before $t$, we can conclude
that the property is true already at time point~$t$, which may be strictly smaller than $20$.
In this case, the assumption leads to earlier verdicts.

On the other hand, if after the two ``$a$'', a ``$b$'' is observed in the interval~$[t,t+10]$, then the observation violates the assumption, even if the property may be satisfied (i.e., if the ``$b$'' is observed in the interval~$(20,t+10]$). 
\end{example}

\begin{figure}[bt]
    \centering
    \begin{tikzpicture} [node distance = 2.7cm,thick,>=stealth]
    \node (q0)  at (-3,0)   [state, initial text={}]          {$q_0$};
    \node (i)   at (-4,0) {};

    \node (qhalf) at (0,0) [state,accepting] {$q_1$};
    \node (q1)  at (4.5,0)   [state, accepting]    {$q_2$};
    \node (q2)  at (9,0)  [state]     {$q_3$};
    
    \draw[->, thick] (i) edge (q0);
     \draw[->, thick] (qhalf) edge [bend left=30] node[above] {$a$} node[below]{$y:=0$} (q1);

\draw[->, thick] (q0) edge[bend left] node[above,align=center] {$\initsymb$\\$x:=0$\\$y:=0$} (qhalf);

     \draw[->, thick] (q1) edge [loop right] node[right,align=center]{$a$\\$y:=0$} ();

     \draw[->, thick] (qhalf) edge [loop left] node[below,align=center]{$b$\\ $x>1$} (qhalf);

     \draw[->, thick] (q1) edge [bend left=30] node[above] {$b$} node[below]{$x > 1 \wedge y > 10$} (qhalf);

         \draw (qhalf.north) edge[->, thick] [bend left=30] node[above] {$b$} node[below]{$x \le 1$} (q2.north);
              \draw (q1.north east) edge[->, thick] [bend left=30] node[above] {$b$} node[below]{$x \le 1 $} (q2);
              \draw (q1.south east) edge[->, thick] [bend right=30] node[above] {$b$} node[below]{$ y \le 10$} (q2);

    \draw (q2) edge[loop right, thick, ->] node[right] {$a,b$} ();
\end{tikzpicture}
    \caption{A \tba for the language of the formula~$ \initsymb \land X G\neg\initsymb \land G_{[0,1]}\neg b \wedge G(a
      \rightarrow G_{[0,10]} \neg b)$ with accepting locations $q_1$ and $q_2$.} 
    \label{fig:aut-example2}
\end{figure}

\begin{example}\label{ex:partial-obs}
  Let us consider again the property ``$\initsymb\land F_{[0,10]}a \land
  G_{[0,20]}\neg b$'' but now we observe ``$a$'' (possibly) with 
  uncertainty on the timestamps and ``$b$'' is unobservable.

  For example, after the initial ``$\initsymb$'', we observe  ``$a$'' at time
  $0$, another time in the interval $[6,7]$ and a final time in the interval $[15,16]$, but no other $\initsymb$, and now is time~$30$.
  The words that are consistent with these observations have the form
  $(\initsymb,0)
  \rho_0 (a,0) \rho_1 (a, t_1) \rho_2 (a, t_2) \rho_3
  $ 
  where 
  \begin{itemize}
      \item $t_1 \in [6,7]$ and $t_2 \in [15,16]$,
      \item $\rho_0$ is a (possibly empty) finite timed word~$(b,0) \cdots (b,0)$,
      \item $\rho_1$ is a (possibly empty) finite timed word~$(b,t_{1,1}) \cdots (b,t_{1,n_1})$ with $t_{1,j} \in [0,t_1]$ for all $1 \le j \le n_1$,
      \item $\rho_2$ is a (possibly empty) finite timed word~$(b,t_{2,1}) \cdots (b,t_{2,n_2})$ with $t_{2,j} \in [t_1,t_2]$ for all $1 \le j \le n_2$, and
      \item $\rho_3$ is a (possibly empty) finite timed word~$(b,t_{3,1}) \cdots (b,t_{3,n_3})$ with $t_{3,j} \in [t_2,30]$ for all $1 \le j \le n_3$.
  \end{itemize}

  Without assumptions we cannot have any conclusive verdict, because
  we do not know if a ``$b$'' occurred before time point~$20$ or not. But with the assumption from
  the previous example, we can conclude at time $16$ that the
  property is true:
\begin{itemize}
    \item $\rho_0$ must be empty, as there cannot be a ``$b$'' within the first unit of time.
    \item $\rho_1$ must be empty, as there cannot be a ``$b$'' for ten units of time after the ``$a$'' at time point~$0$ and $t_1 \le 7 \le 10$.
    \item $\rho_2$ must be empty, as there cannot be a ``$b$'' for ten units of time after the ``$a$'' at time point~$t_1$ and $t_2 \le 16 \le t_1+10$.
    \item $\rho_3$ cannot contain a ``$b$'' with timestamp~$t_{3,j} \le 20$, as this would imply that a ``$b$'' has occurred less than ten units of time after the ``$a$'' at $t_2$.
\end{itemize}
  Thus, under the assumption, we can make a definitive verdict, which we could not without the assumption.
\end{example}

\subsection{Abstract Definition and Basic Properties}\label{sec:abstract}
In the following, we formalize this intuition. To develop the theory as general as possible, we allow real time points in the observations. Later, when we are concerned with algorithms, we will restrict ourselves to rational inputs.
In the same spirit, we begin with a very abstract definition of monitoring under assumptions. Later, we will explain how to represent the property, the assumption, and the observation finitely.

\begin{definition}
\label{def:assumpmonitor}
Let $\assumpverdicts = \{\top,\bot,\unknown,\outofmodel\}$.
Given a property~$\varphi \subseteq \TSigma^\omega$ of infinite timed words, an assumption~$A \subseteq \TSigma^\omega$, an  observation~$O \subseteq \TSigma^*$, and a current time point~$t \ge \tau(O)$, the function $\assumpevalfuncsymbol \colon (\pow{\TSigma^\omega} \times \pow{\TSigma^\omega}) \rightarrow (\pow{\TSigma^*}  \times \nnreals) \rightarrow \assumpverdicts$ evaluates to a verdict with the following definition:
\[
\assumpevalfunc{\varphi,A}{O,t} = \left.
  \begin{cases}
    \outofmodel & \text{if } O \cdot_t \TSigma^\omega \cap A = \emptyset, \\
    \top & \text{if } O \cdot_t \TSigma^\omega \cap A \neq \emptyset \text{ and } O \cdot_t \TSigma^\omega \cap A \subseteq \varphi, \\
    \bot & \text{if } O \cdot_t \TSigma^\omega \cap A \neq \emptyset \text{ and } O \cdot_t \TSigma^\omega \cap A \subseteq \TSigma^\omega \setminus \varphi, \\
    \unknown & \text{otherwise}.
  \end{cases}
  \right.
\]
$\assumpevalfunc{\varphi,A}{O,t}$ is undefined when $t < \tau(O)$.
\end{definition}

Before we study how to specify assumptions and observations, and how to compute $\assumpevalfuncsymbol$, we study some properties of $\assumpevalfuncsymbol$.
First of all, our definition satisfies the two maxims of \myquot{impartiality} and \myquot{anticipation}~\cite{DLT}:
Impartiality requires that if an observation yields a definitive verdict (i.e., $\top$, $\bot$, or $\outofmodel$ in our case), then every continuation of the observation also yields the same verdict. Note that impartiality is satisfied by always giving the verdict~$\unknown$.
Hence, anticipation requires that a definitive verdict is given as soon as possible.

To formalize this in the setting of inexact observations, we need to define what it means for an observation to extend another one.
Let $O,O' \subseteq \TSigma^*$ and $t \ge \tau(O)$. We say that $O'$ extends $O$ at $t$,
written $O \sqsubseteq_t O'$ if 
\begin{itemize}
    \item for every $\obs \in O$ there exists an $\obs^* \in \TSigma^*$ such that $\obs \cdot_t \obs^* \in O'$ and
    \item for every $\obs' \in O'$ there exists an $\obs \in O$ and an $\obs^* \in \TSigma^*$ such that $\obs' = \obs \cdot_t \obs^*$.
\end{itemize}

Also note that extending an observation that is consistent with an assumption~$A$ might lead to an observation that is no longer consistent with $A$, which implies that a (definitive) verdict~$\top$ or $\bot$ may turn into the definitive verdict~$\outofmodel$ when an observation is extended.
This has to be taken into account when defining impartiality for monitoring under assumptions.

\begin{lemma}
\label{lemma_impanti}
Fix a property~$\varphi \subseteq \TSigma^\omega$ and an assumption~$A \subseteq \TSigma^\omega$.
\begin{enumerate}
    \item \label{lemma_impanti_imp} $\assumpevalfuncsymbol$ is impartial, i.e., $\assumpevalfunc{\varphi, A}{O,t} \in \{\top, \bot, \outofmodel\}$ implies $\assumpevalfunc{\varphi, A}{O',t'} \in \{ \assumpevalfunc{\varphi, A}{O,t},\outofmodel\}$ for all $O \sqsubseteq_t O'$, $t \ge \tau(O)$, and $t' \ge \tau(O')$, i.e., a definitive verdict is stable under further observations, as long as the observations are consistent with the assumption.
    
    \item \label{lemma_impanti_anti}  $\assumpevalfuncsymbol$ is anticipatory, i.e., $\assumpevalfunc{\varphi, A}{O,t} = \unknown$ implies that 
    $O \cdot_t \TSigma^\omega \cap A \cap \varphi$ and $O \cdot_t \TSigma^\omega \cap A \cap \TSigma\setminus \varphi$ are both nonempty, i.e., the observation~$O$ is consistent with the assumption and satisfaction of $\varphi$ and the observation~$O$ is consistent with the assumption and the violation of $\varphi$.
\end{enumerate}
\end{lemma}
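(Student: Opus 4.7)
The plan hinges on a monotonicity-style containment: whenever $O \sqsubseteq_t O'$ with $t \le t'$ and $t' \ge \tau(O')$, one has $O' \cdot_{t'} \TSigma^\omega \subseteq O \cdot_t \TSigma^\omega$. To prove this, take any $\rho \in O' \cdot_{t'} \TSigma^\omega$ and decompose it as $\rho = \obs' \cdot_{t'} \rho^*$ with $\obs' \in O'$ and $\rho^* \in \TSigma^\omega$. The second clause in the definition of $\sqsubseteq_t$ further writes $\obs' = \obs \cdot_t \obs^*$ with $\obs \in O$, $\obs^* \in \TSigma^*$. A direct reassembly of the timestamps yields the identity $(\obs \cdot_t \obs^*) \cdot_{t'} \rho^* = \obs \cdot_t (\obs^* \cdot_{t'-t} \rho^*)$, which is well-defined because $t' \ge \tau(\obs') = t + \tau(\obs^*)$ gives $t'-t \ge \tau(\obs^*) \ge 0$; this realises $\rho$ as an element of $O \cdot_t \TSigma^\omega$.

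For part~\ref{lemma_impanti_imp}, I would case-split on the value of $\assumpevalfunc{\varphi,A}{O,t}$ and apply the above containment. If the verdict is $\outofmodel$, then $O \cdot_t \TSigma^\omega \cap A = \emptyset$ forces $O' \cdot_{t'} \TSigma^\omega \cap A = \emptyset$, so the new verdict is again $\outofmodel$. If the verdict is $\top$, the inclusion $O' \cdot_{t'} \TSigma^\omega \cap A \subseteq O \cdot_t \TSigma^\omega \cap A \subseteq \varphi$ survives, so the new verdict is $\top$ when this latter intersection is nonempty and $\outofmodel$ when it is empty. The $\bot$ case is entirely symmetric, and together these cases establish that the new verdict always lies in $\{\assumpevalfunc{\varphi,A}{O,t}, \outofmodel\}$. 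For part~\ref{lemma_impanti_anti}, the conclusion is a syntactic unwinding of the definition of $\assumpevalfuncsymbol$: a value of $\unknown$ means, by elimination, that the $\outofmodel$, $\top$, and $\bot$ clauses all fail, and the negation of the $\top$ (respectively $\bot$) clause is precisely the assertion that $O \cdot_t \TSigma^\omega \cap A$ is not contained in $\varphi$ (respectively $\TSigma^\omega \setminus \varphi$), hence witnesses some element of $O \cdot_t \TSigma^\omega \cap A \cap (\TSigma^\omega \setminus \varphi)$ (respectively $O \cdot_t \TSigma^\omega \cap A \cap \varphi$).

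The main obstacle is the concatenation identity in the first paragraph: one has to carefully verify that each invocation of $\cdot_{(\cdot)}$ is well-defined via the inequalities $t \ge \tau(O)$, $t' \ge \tau(O')$, and the tacit $t' \ge t$ (the monitor's clock does not move backwards), and that the rebracketing preserves exactly the sequence of events and their timestamps. Once this auxiliary containment is in hand, both impartiality and anticipation reduce to elementary set-theoretic case analysis on the four clauses defining $\assumpevalfuncsymbol$.
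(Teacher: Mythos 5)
Your proposal is correct and follows essentially the same route as the paper: establish the containment $O' \cdot_{t'} \TSigma^\omega \subseteq O \cdot_t \TSigma^\omega$, then case-split on the verdict for impartiality and unwind the definition for anticipation. The only difference is that you actually prove the containment via the rebracketing identity $(\obs \cdot_t \obs^*) \cdot_{t'} \rho^* = \obs \cdot_t (\obs^* \cdot_{t'-t} \rho^*)$ (and correctly flag the tacit assumption $t' \ge t$ needed for it), whereas the paper merely asserts this fact.
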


\begin{proof}
\ref{lemma_impanti_imp}.) If $O \sqsubseteq_t O'$, $t \ge \tau(O)$, and $t' \ge \tau(O')$, then $O \cdot_t \TSigma^\omega \supseteq O' \cdot_{t'} \TSigma^\omega$.
We will apply this fact in each of the cases of the following case distinction.

\begin{itemize}
    \item If $\assumpevalfunc{\varphi, A}{O,t} = \outofmodel$, then we have $O \cdot_t \TSigma^\omega \cap A = \emptyset$ by definition. 
    Hence, we obtain
\[
O' \cdot_{t'} \TSigma^\omega \cap A \subseteq O \cdot_t \TSigma^\omega \cap A = \emptyset,
\]
which implies $\assumpevalfunc{\varphi, A}{O',t'} = \outofmodel$, as required.

\item If $\assumpevalfunc{\varphi, A}{O,t} = \top$, then we have $O \cdot_t \TSigma^\omega \cap A \neq \emptyset$ and  $O \cdot_t \TSigma^\omega \cap A \subseteq \varphi$ by definition.
Now, if $O' \cdot_{t'} \TSigma^\omega \cap A = \emptyset$, then we obtain $\assumpevalfunc{\varphi, A}{O',t'} = \outofmodel$, i.e., the observation~$O$ was extended to $O'$ in a way that is not longer consistent with the assumption~$A$.
On the other hand, if $O' \cdot_{t'} \TSigma^\omega \cap A \neq \emptyset$, then
\[
O' \cdot_{t'} \TSigma^\omega \cap A \subseteq O \cdot_t \TSigma^\omega \cap A \subseteq \varphi,
\]
which implies $\assumpevalfunc{\varphi, A}{O',t'} = \top$, as required.

\item The reasoning for $\assumpevalfunc{\varphi,A}{O,t} =  \bot$ is dual, we just have to replace $\top$ by $\bot$ and $\varphi$ by $\TSigma^\omega \setminus \varphi$ in the previous case.
\end{itemize}

\ref{lemma_impanti_anti}.) This follows directly from the definition.
\end{proof}

Now, let us order $\assumpverdicts$ with the partial order shown in Figure~\ref{fig_order}, with the intuition that $\verdle$ orders verdicts by their specificity. 

\begin{figure}[h]
    \centering
    \begin{tikzpicture}
        \node (unk) at (0,0) {$\unknown$};
        \node (bot) at (-1,1) {$\bot$};
        \node (top) at (1,1) {$\top$};
        \node (oom) at (0,2) {$\outofmodel$};

        \node[rotate=45] at (.5,.5) {$\verdle$};
        \node[rotate=135] at (-.5,.5) {$\verdle$};
        \node[rotate=135] at (.5,1.5) {$\verdle$};
        \node[rotate=45] at (-.5,1.5) {$\verdle$};

    \end{tikzpicture}
    \caption{The (partial) specificity order~$\verdle$ on $\assumpverdicts$.}
    \label{fig_order}
\end{figure}

Then, more restrictive assumptions and more precise observations lead to more specific verdicts.

\begin{lemma}
Let $A \supseteq A'$, $O \supseteq O'$, and $t \le t'$ with $t \ge \tau(O)$ and $t' \ge \tau(O')$. Then,
\[
\assumpevalfunc{\varphi,A}{O,t} \verdle \assumpevalfunc{\varphi,A'}{O',t'}
.\]
\end{lemma}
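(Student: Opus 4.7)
The plan is to reduce the claim to a single set-inclusion, namely
\[
O' \cdot_{t'} \TSigma^\omega \cap A' \ \subseteq\ O \cdot_{t} \TSigma^\omega \cap A,
\]
and then run a short case analysis on the verdict $\assumpevalfunc{\varphi,A}{O,t}$ using the definition.

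First I would establish the set-inclusion by chaining the two facts of Remark~\ref{remark_concatprops}. Since $O' \subseteq O$ and both satisfy $\tau(O'),\tau(O) \le t \le t'$, the first part of the remark gives $O' \cdot_{t'} \TSigma^\omega \subseteq O \cdot_{t'} \TSigma^\omega$. Then the second part, applied with the language~$O$ and the bounds $\tau(O) \le t \le t'$, gives $O \cdot_{t'} \TSigma^\omega \subseteq O \cdot_{t} \TSigma^\omega$. Composing these and intersecting with $A' \subseteq A$ yields the desired inclusion. Let me write $S = O \cdot_{t} \TSigma^\omega \cap A$ and $S' = O' \cdot_{t'} \TSigma^\omega \cap A'$, so $S' \subseteq S$.

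Next I would do a case analysis on $v = \assumpevalfunc{\varphi,A}{O,t}$ and verify in each case that $v \verdle \assumpevalfunc{\varphi,A'}{O',t'}$. If $v = \unknown$, the claim holds vacuously since $\unknown$ is the bottom of $\verdle$. If $v = \outofmodel$, then $S = \emptyset$, so $S' = \emptyset$ and hence $\assumpevalfunc{\varphi,A'}{O',t'} = \outofmodel$. If $v = \top$, then $S \neq \emptyset$ and $S \subseteq \varphi$, so $S' \subseteq \varphi$; depending on whether $S' = \emptyset$ or not, the refined verdict is $\outofmodel$ or $\top$, and in either case $\top \verdle \assumpevalfunc{\varphi,A'}{O',t'}$ by inspection of Figure~\ref{fig_order}. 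The case $v = \bot$ is symmetric, replacing $\varphi$ by $\TSigma^\omega \setminus \varphi$.

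There is no real obstacle beyond bookkeeping; the only slightly delicate point is verifying that the hypothesis $\tau(O) \le t'$ needed for the first application of Remark~\ref{remark_concatprops} is available, which it is by transitivity from $\tau(O) \le t \le t'$. Everything else reduces to the observation that the order~$\verdle$ is precisely designed so that shrinking the set~$S$ of admissible continuations can only push a definitive verdict up toward $\outofmodel$ or leave it unchanged, while $\unknown$ can move to anything.
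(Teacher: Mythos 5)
Your proposal is correct and follows essentially the same route as the paper's proof: a case distinction on $\assumpevalfunc{\varphi,A}{O,t}$, with the inclusion $O' \cdot_{t'} \TSigma^\omega \cap A' \subseteq O \cdot_{t} \TSigma^\omega \cap A$ obtained from Remark~\ref{remark_concatprops}. The only cosmetic difference is that you factor this inclusion out as a single preliminary step, whereas the paper invokes the remark inside each case.
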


\begin{proof}
We proceed by case distinction.

\begin{itemize}
    \item If $\assumpevalfunc{\varphi,A}{O,t} = \unknown $, the claim is vacuously true, as $\unknown$ is the $\verdle$-smallest element in $\assumpverdicts$.

    \item If $\assumpevalfunc{\varphi,A}{O,t} = \top$, then we have $O \cdot_t \TSigma^\omega \cap A \neq \emptyset $ and $O \cdot_t \TSigma^\omega \cap A \subseteq \varphi$ by definition. Now, if we have $O' \cdot_{t'} \TSigma^\omega \cap A' = \emptyset $, then $\assumpevalfunc{\varphi,A'}{O',t'} = \outofmodel$, which, as $\verdle$-largest element, is more specific than $\assumpevalfunc{\varphi,A}{O,t}$. 
    On the other hand, if $O' \cdot_{t'} \TSigma^\omega \cap A' \neq \emptyset $, then applying Remark~\ref{remark_concatprops} yields
    \[
    O' \cdot_{t'} \TSigma^\omega \cap A' \subseteq O \cdot_t \TSigma^\omega \cap A \subseteq \varphi.
    \]
    Thus,
    \[
    \assumpevalfunc{\varphi,A'}{O',t'} = \top \verdge \top = \assumpevalfunc{\varphi,A}{O,t}
    \]
    as required.

    \item The reasoning for $\assumpevalfunc{\varphi,A}{O,t} =  \bot$ is dual, we just have to replace $\top$ by $\bot$ and $\varphi$ by $\TSigma^\omega \setminus \varphi$ in the previous case.

    \item Finally, if $\assumpevalfunc{\varphi,A}{O,t} = \outofmodel$, i.e., we have $O \cdot_t \TSigma^\omega \cap A = \emptyset$, then applying Remark~\ref{remark_concatprops} yields
    \[
    O' \cdot_{t'} \TSigma^\omega \cap A' \subseteq O \cdot_t \TSigma^\omega \cap A = \emptyset,
    \]
    which implies
    \[
    \assumpevalfunc{\varphi,A'}{O',t'} = \outofmodel \verdge \outofmodel = \assumpevalfunc{\varphi,A}{O,t}
    \]
    as required.\qedhere
\end{itemize}
\end{proof}

\subsection{From the Abstract Definition to a Concrete Definition}\label{sec:concrete}

Our goal is to present an algorithm computing the monitoring function~$\assumpevalfuncsymbol$ in the concrete setting where
\begin{itemize}
    \item the property~$\varphi$ and its complement are accepted by \tba's (this covers in particular the case of $\varphi$ being given in \mitl due to Theorem~\ref{thm:mtltba}),
    \item the assumption~$A$ is given by a \tba, and
    \item the observation~$O$ is given by a sequence of time-intervals and propositional formulas over the locations, the clock constraints, and the alphabet of the assumption automaton capturing the (incomplete) information observed during monitoring.
\end{itemize}

We begin by introducing the assumption and observations. 
The former is given by a \tba, which we typically denote by~$\auta$ to distinguish it from other \tba.
Thus, let $\auta = (Q, Q_0, \Sigma, C, \Delta, \mathcal{F})$ be a \tba, i.e., $Q$ is the set of locations, $\Sigma$ is the alphabet, and $C$ is the set of clocks.
Let $G'(C)$ denote the clock constraints over $C$ with non-negative rational bounds, i.e., conjunctions of atomic constraints of the form~$c \sim t$, where $c \in C$ is a clock, $t \in \nnrats$, and $\sim\ \in \{<, \leq, =, \geq, >\}$.
Let $\phi$ be a (finite) propositional formula over the set~$ \Sigma \cup Q \cup G'(C)$ of propositions (which is infinite!), let $\sigma \in \Sigma$, and let $(q,v)$ be a state of $\auta$.
We define $\sigma, (q, v) \models \phi$ as follows:
\begin{itemize}
     \item For $\sigma' \in \Sigma$, $\sigma, (q, v) \models \sigma'$ if and only if $\sigma' = \sigma$.
     \item For $q' \in Q$, $\sigma, (q, v) \models q'$ if and only if $q' = q$.
     \item For $g \in G'(C)$, $\sigma, (q, v) \models g$ if and only if $g$ is satisfied by $v$.
     \item The semantics of Boolean connectives is defined as usual.
\end{itemize}

An $\auta$-observation is a finite sequence~$\obs = (\phi_1,I_1,m_1)\cdots (\phi_n,I_n,m_n)$ where the $\phi_j$ are (finite) propositional formulas over $ \Sigma \cup Q \cup G'(C)$, the $I_j$ are bounded intervals of $\nnreals$ with rational endpoints (which may overlap), and the multiplicities~$m_j$ are in $\{\le \multbound, = \multbound, \ge \multbound \mid \multbound \in \nats_0\}$. 
These multiplicities can describe bounds on the number of occurrences of events. For example, 
\begin{itemize}
    \item exactly \(k\) events that occur in the interval $I$ and satisfy $\phi$ is captured by the triple~(\(\phi, I, =k\)), 
    \item an unbounded number of events that occur in the interval $I$ and satisfy $\phi$ is captured by (\(\phi, I, \ge 0\)), 
    \item at most \(k\) events that occur in the interval $I$ and satisfy $\phi$ is captured by (\(\phi, I, \le k\)), 
    \item at least \(k\) events that occur in the interval $I$ and satisfy $\phi$ is captured by (\(\phi, I, \ge k\)), and 
    \item a number of events in \([\ell, u]\) that occur in the interval $I$ and satisfy $\phi$ is captured by the concatenation~(\(\phi, I, = \ell\))(\(\phi, I, \le u-\ell\)).
\end{itemize}

Let $\rho = (\sigma_1, \tau_1) \cdots (\sigma_{n'}, \tau_{n'})$ be a finite timed word and let 
\[r = (q_0, v_0) \transition{1} (q_1,v_1) \transition{2} \cdots \transition{n'-1} (q_{n'-1}, v_{n'-1}) \transition{n'} (q_{n'},v_{n'}) \]
be a prefix of a run of $\auta$ with $q_0 \in Q_0$ and $v_0(c) = 0$ for all $c \in C$ (note that $r$ processes $\rho$). 
Then, we say that $r$ witnesses that $\rho$ is consistent with an observation~$\obs = (\phi_1,I_1,m_1)\cdots (\phi_n,I_n,m_n)$, if there is a function~$h\colon \{1,2,\ldots, n'\} \rightarrow \{1,2,\ldots, n\}$ such that
\begin{enumerate}
    \item $h(1) \le h(2) \le \cdots \le h(n')$,
    \item $\sigma_{j'}, (q_{j'}, v_{j'}) \models \phi_{h(j')}$ for all $j' \in \{1,2,\ldots, n'\}$,
    \item $\tau_{j'} \in I_{h(j')}$ for all $j' \in \{1,2,\ldots, n'\}$, and
    \item for all $j \in \{1,2,\ldots, n\}$:
    \begin{itemize}
        \item If $m_j$ is equal to ${\le \multbound}$, then $\size{\{j' \in \{1,2,\ldots, n'\} \mid h(j') = j\}} \le \multbound$.
        \item If $m_j$ is equal to ${= \multbound}$, then $\size{\{j' \in \{1,2,\ldots, n'\} \mid h(j') = j\}} = \multbound$.
        \item If $m_j$ is equal to $\ge \multbound$, then $\size{\{j' \in \{1,2,\ldots, n'\} \mid h(j') = j\}} \ge \multbound$.
    \end{itemize}
\end{enumerate}
Intuitively, the function \(h\) maps indexes of $r$ and $\rho$ to indexes of $o$. Each index of $r$ and $\rho$ has to satisfy an element in $o$ (i.e., the one it is mapped to by $h$), and each element of $o$ has to be satisfied by a number of elements in $r$ and $\rho$ according to its multiplicity (i.e., the elements that are mapped to it by $h$).

\begin{definition}   
The language~$\conswords{\auta}{ \obs } \subseteq \TSigma^*$ contains all words that are  consistent with~$\obs$.
\end{definition}

\begin{example}\label{ex:observation}
Let us continue Example~\ref{ex:partial-obs} and let $\auta$ be
the assumption automaton shown in
Figure~\ref{fig:aut-example2}. Consider the $\auta$-observation
\begin{multline*}
\obs=(\initsymb,[0,0],=1)(a,[0,0],=1)(\neg a,[0,7],\ge 0)(a,[6,7],=1)\\(\neg a,[6,16],\ge 0)(a,[15,16],=1)(\neg a,[0,30],\ge 0).
\end{multline*}
Then, as argued in Example~\ref{ex:partial-obs}, $\conswords{\auta}{ \obs }$ is the language 
\begin{multline*}
\{
(\initsymb,0)(a,0) (a, t_1) (a, t_2) (b, t_{3,1}) \cdots (b, t_{3,n_3}) \mid\\ t_1 \in [6,7], t_2 \in [15,16]\text{, and } t_2 +10 < t_{3,1} \le \cdots \le t_{3,n_3} \le 30
\}    .
\end{multline*}

For example, given the run prefix (we ignore the clock~$x$ as it is always equal to the timestamp on the transition leading to a state)
\[r_0 = (q_0, y=0) \labeledtransition{(\initsymb,0)}  (q_1, y=0) \labeledtransition{(a,0)} (q_2,y=0) \labeledtransition{(a,6)} (q_{2}, y=0) \labeledtransition{(a,15)} (q_{2},y=0) \]
we can define $h$ as follows: $h(1)=1, h(2)=2,
h(3)=4$, $h(4)=6$. 
Thus, $r_0$ witnesses that $(\initsymb,0)(a,0)(a,6)(a,15)$ is in $\conswords{\auta}{\obs}$.

For the run prefix
\[r_1 = (q_0, y=0) \labeledtransition{(\initsymb,0)}(q_1, y=0) \labeledtransition{(a,0)} (q_2,y=0) \labeledtransition{(a,6)} (q_{2}, y=0) \labeledtransition{(b,15)} (q_{3},y=9) \labeledtransition{(a,16)} (q_{3},y=10)\]
we can define the function $h$ as follows: $h(1) =1, h(2)=2, h(3)=4,
h(4)=5, h(5)=6$.
Thus, $r_1$ witnesses that $(\initsymb,0)(a,0)(a,6)(b,15)(a,16)$ is in $\conswords{\auta}{\obs}$.
Note that $(\initsymb,0)(a,0)(a,6)(b,15)(a,16)$ is \emph{not} a prefix of any word that satisfies the assumption, as it contains an ``$a$'' that is followed by a ``$b$'' within nine units of time. 
But it is in $\conswords{\auta}{\obs}$, as membership in that set only depends on the existence of a run prefix, and not the existence of a prefix of an accepting run. 
The acceptance condition will be later taken care of (concretely in the first step of our algorithm in Subsection~\ref{sec:algorithm}).

Finally, the run prefix
\[r_2 = (q_0, y=0) \labeledtransition{(\initsymb,0)} (q_1, y=0) \labeledtransition{(a,0)} (q_2,y=0) \labeledtransition{(a,6)} (q_{2}, y=0) \labeledtransition{(a,15)} (q_{2},y=0) \labeledtransition{(a,16)} (q_{2},y=0)\]
is the prefix of a run of $\auta$ but it is not
compatible with the observation $\obs$. In fact, any $h$ satisfying
the conditions 1), 2), and 3) should assign $h(4)=6$ and $h(5)=6$
violating condition 4).
Thus, $r_2$ is not a witness for any word in $\conswords{\auta}{\obs}$.
\end{example}

Before we continue, let us collect some useful facts about $\auta$-observations.

\begin{remark}
We can restrict ourselves w.l.o.g.\ to multiplicities~$m_i$ from the set~$\{\le \multbound, = \multbound, \ge 0 \mid \multbound \in \nats\}$, as elements~$(\phi_i,I_i,m_i)$ with multiplicity~$m_i$ of the form~$\le 0$ or $= 0$ can be removed from the observation without changing the set of consistent words, and as each element~$(\phi_i,I_i,\ge \multbound)$, for some $\multbound > 0$, can be replaced by $(\phi_i,I_i,=\multbound)(\phi_i,I_i,\ge 0)$ without changing the set of consistent words.
Thus, in all following proofs we only consider such restricted multiplicities.
\end{remark}

Furthermore, let us note that there is a syntactic upper bound on the duration of finite timed words that can be consistent with an observation.

\begin{remark}
We have $\tau(\conswords{\auta}{ (\phi_1,I_1,m_1)\cdots(\phi_{n},I_{n},m_n)}) \le \sup I_n$ by definition.
\end{remark}

Before we study the algorithmic properties of our monitoring framework we have introduced it is instructive to compare it to the abstract definition of monitoring under assumptions presented in Definition~\ref{def:assumpmonitor}: 
In our concrete setting, the language~$L(\auta)$ takes the role of the assumption~$A$, the language~$\conswords{\auta}{\obs}$ takes the role of the observation~$O$, and the property~$\varphi$ is represented by two automata, one accepting $\varphi$ and one accepting its complement~$\TSigma^\omega \setminus \varphi$.
Note that here the observation depends on the \tba inducing the assumption, as observations can refer to locations and clock valuations of $\auta$.

Next, we show that we can determine whether an observation has at least one finite word that is consistent with the assumption. If this is not the case, i.e., $\conswords{\auta}{\obs}$ is empty, then $\obs$ is not consistent with the assumption~$\auta$.
However, the converse is in general not true, i.e., $\conswords{\auta}{\obs}$ may be nonempty, but the words in it cannot be extended to infinite words in $L(\auta)$.

\begin{lemma}
\label{lemma_obsemptinesstest}
The problem \myquot{Given a \tba~$\auta$ and an $\auta$-observation~$\obs$, is $\conswords{\auta}{\obs}$ nonempty?} is decidable.
\end{lemma}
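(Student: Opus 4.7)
The plan is to reduce the question to location reachability in a constructed timed automaton, which is decidable by classical results. By the preceding remark, I may assume every multiplicity in $\obs = (\phi_1, I_1, m_1) \cdots (\phi_n, I_n, m_n)$ is of the form $\le \multbound$, $= \multbound$ with $\multbound \ge 1$, or $\ge 0$; let $K$ be the maximum such $\multbound$ appearing in $\obs$.

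I would build a timed automaton $\mathcal{P}$ with locations $(q, i, k) \in Q \times \{1, \ldots, n\} \times \{0, 1, \ldots, K\}$ that simultaneously tracks a prefix run of $\auta$ at location $q$, the current observation position $i$ (the last value taken by $h$), and a counter $k$ recording how often position $i$ has been used so far. Its clocks are $C \cup \{t\}$, where $t$ is a fresh clock that is never reset, so it always holds the current absolute time; the initial locations are $\{(q_0, 1, 0) \mid q_0 \in Q_0\}$ with all clocks zero. A transition on symbol $\sigma$ from $(q, i, k)$ to $(q', j, k')$ with $j \ge i$ is obtained from a transition $(q, q', \sigma, \lambda, g) \in \Delta$ of $\auta$ by strengthening the guard to enforce additionally: $t \in I_j$; the evaluation of $\phi_j$ with symbol $\sigma$, target location $q'$, and the post-elapse, post-reset clock valuation; and, when $j > i$, that $k$ satisfies $m_i$ while every intermediate $m_{i'}$ (with $i < i' < j$) has the form $\le \multbound$ or $\ge 0$ (so that count $0$ is admissible). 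The updated counter is $k+1$ when $j = i$ (and must not exceed the bound of $m_i$) or $1$ when $j > i$. Since each $\phi_j$ is a finite propositional formula over $\Sigma \cup Q \cup G(C)$, only finitely many atomic propositions actually occur in it, so it can be effectively compiled into a finite disjunction of timed-automaton guards by case analysis on the satisfying $(\sigma, q')$-combinations and the finitely many clock constraints present in $\phi_j$. Call a location $(q, i, k)$ \emph{terminal} if $k$ satisfies $m_i$ and every $m_{i'}$ with $i < i' \le n$ has the form $\le \multbound$ or $\ge 0$.

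I would then establish the key equivalence: $\conswords{\auta}{\obs}$ is nonempty if and only if some terminal location of $\mathcal{P}$ is reachable from an initial one. The forward direction translates a witnessing run prefix of $\auta$ together with its accompanying $h$ directly into a path of $\mathcal{P}$; the backward direction extracts from a $\mathcal{P}$-run both an underlying timed word and a witnessing $h$, which is monotone by construction (encoded in the second component of the visited locations). Since location reachability for timed automata is decidable via the region construction, the problem is effectively solvable. The main technical obstacle is the careful encoding of $\phi_j$ as a guard, since its satisfaction relation refers to the post-transition location and the post-elapse, post-reset clock valuation rather than the pre-transition state; this is purely an encoding issue, resolved by the finiteness of $\phi_j$, which allows a finite enumeration of cases.
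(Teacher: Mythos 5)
Your proposal is correct and follows essentially the same route as the paper: both compile the observation into a finite timed automaton built over $\auta$'s state space extended with bookkeeping for the current observation position and multiplicity count, turn each $\phi_j$ into clock-constraint guards by finite case analysis on the symbol and target location, add a never-reset absolute-time clock to enforce the intervals $I_j$, and conclude by decidability of timed-automaton emptiness/reachability. The paper realizes the bookkeeping inductively, appending $\multbound$ copies of $\auta$ per observation element instead of using your explicit $(q,i,k)$ product, but this is a presentational difference rather than a different argument.
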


\begin{proof}
We inductively (over the length of $\obs$) construct a timed automaton~$\auta \otimes\obs$ (over finite timed words) that accepts $\conswords{\auta}{\obs}$. 
Our result follows then from non-emptiness of timed automata being in PSPACE~\cite{alur1994tba}.

Let $\auta = (Q, Q_0, \Sigma, C, \Delta, \mathcal{F})$.
The timed automata~$\auta \otimes\obs$ we construct satisfy the following invariants: The set of locations~$Q'$ of $\auta \otimes\obs$ is always a subset of $Q \times S$ for some finite set~$S$ (but not necessarily equal to $Q \times S$).
In the inductive proof, we assume that $S \cap \nats$ is empty, which can always be achieved by renaming locations. This allows us to use subsets of $Q \times \nats$ as fresh states during the induction.
Further, the set of clocks of each $\auta \otimes\obs$ is equal to $C \cup \{time\}$, where $time$ is a fresh clock that is never reset, i.e., it measures the time that has passed since the start of a run.

The automata~$\auta \otimes\obs$ we construct have rationals in their clock constraints. 
These can be eliminated by multiplying each one of them by the largest denominator appearing in the automaton, which results in an emptiness-equivalent timed automaton.
Furthermore, we use clock constraints of the form~$time \in I$, for $I = [\ell, u]$ with rational endpoints~$\ell$ and $u$, as shorthands for $time \ge \ell \wedge time \le u$.

Now, we can begin the inductive construction by considering the empty observation~$\obs = \varepsilon$: We define $\auta\otimes\varepsilon = (Q \times \{s\}, Q_0 \times\{s\}, \Sigma, C \cup \{time\}, \emptyset, Q_0\times\{s\} )$, where $s$ is a fresh symbol.
This timed automaton satisfies the invariants and accepts the language~$\{\varepsilon\}$ if $Q_0$ is nonempty, otherwise it accepts the empty language. This is in both cases equal to $\conswords{\auta}{\varepsilon}$.

For the induction step, consider an observation $\obs(\phi, I, m)$, i.e., we concatenate $\obs$ with $(\phi, I, m)$.
Let $q \in Q$ and $a \in \Sigma$, and let $\phi_{(q,a)}$ be the formula obtained from $\phi$ by replacing each subformula~$a$ and $q$ by $\true$ and each subformula~$a' \neq a$ and $q' \neq q$ by $\false$, i.e., the only remaining atomic subformulas are clock constraints.
By bringing $\phi_{(q,a)}$ into disjunctive normal form and exploiting that clock constraints are closed under conjunction, we obtain that $\phi_{(q,a)}$ is equivalent to a formula of the form~$\bigvee_{d \in D_{(q,a)}} g_{(q,a,d)}$ for some finite index set~$D_{(q,a)}$, where each $g_{(q,a,d)}$ is a clock constraint.

By induction hypothesis, there is a timed automaton~$\auta\otimes \obs = (Q', Q_0', \Sigma, C\cup\{time\}, \Delta', \mathcal{F}')$ with language~$\conswords{\auta}{\obs}$.
We construct $\auta\otimes (\obs(\phi, I, m))$ by case distinction over the form of $m$ by intuitively extending $\auta\otimes \obs$ by transitions capturing $(\phi, I, m)$.

First, assume $m$ is of the form~\myquot{$= e$} for some $e >0$. Intuitively, we extend $\auta\otimes\obs$ by runs of $\auta$ of length exactly $e$ that additionally have to satisfy the requirements spelled out by $\phi$ and $I$, which can be added as clock constraints on the edges.
    To this end, we extend $\auta\otimes\obs$ by adding $e$ copies of $\auta$'s locations and edges leading from $\mathcal{F}'$ to the first copy as well as edges leading from the $j$-th copy to the $(j+1)$-th (see Figure~\ref{fig_constr_equals} for an illustration).
    
    \begin{figure}[h]
    \centering
    \begin{tikzpicture}[thick]

\draw [
    decoration={
        brace,
        mirror,
        amplitude=7pt
    },
    decorate
] (2,-1) -- (9,-1); 

\node[] at (5.5,-1.5) {$e$ copies};

\node at (0,0) {$\auta\otimes\obs$};
\node at (2.5,0) {$\auta$};
\node at (4.5,0) {$\auta$};
\node at (8.5,0) {$\auta$};

  \draw[] (0,0) ellipse (.75 and 1);
  \draw[] (2.5,0) ellipse (.5 and .75);
  \draw[] (4.5,0) ellipse (.5 and .75);
  \node at (6.5,0) {$\cdots$};
  \draw[] (8.5,0) ellipse (.5 and .75);

\path[-stealth, ultra thick] (.6,0) edge[bend left=10] (2.3,0);
\path[-stealth, ultra thick] (2.85,0) edge[bend left=10] (4.3,0);  
\path[-stealth, ultra thick] (4.85,0) edge[bend left=10] (5.9,0);  
\path[-stealth, ultra thick] (7.1,0) edge[bend left=10] (8.3,0);  
    \end{tikzpicture}
    \caption{Extending $\auta\otimes\obs$ to obtain $\auta\otimes(\obs(\phi, I, = e))$.}
    \label{fig_constr_equals}
\end{figure}
    
    Thus, we define $\auta\otimes (o(\phi, I, m)) = (Q'', Q_0', \Sigma, C \cup \{time\}, \Delta'', \mathcal{F}'' )$ where 
    \begin{itemize}
        \item $Q'' = Q' \cup (Q \times \{1,2,\ldots, e\})$ (recall that we assume w.l.o.g., that $Q' \subseteq Q \times S$ with $S \cap \nats = \emptyset$, which implies that the union is disjoint), 
        \item $\Delta''$ contains
    \begin{itemize}
        \item all transitions from $\Delta'$,
        \item all transitions of the form~$( (q,s),(q',1),a,\lambda,g \wedge g_{(q',a,d)} \wedge time \in I )$ for $(q,q',a,\lambda,g) \in \Delta$, $d \in D_{(q,a)}$, and $(q,s) \in \mathcal{F}'$, as well as
        \item all transitions of the form~$( (q,j),(q',j+1),a,\lambda,g \wedge g_{(q',a,d)} \wedge time \in I )$ for $(q,q',a,\lambda,g) \in \Delta$, $d \in D_{(q,a)}$, and $1 \le j < e$,     and
    \end{itemize}
    \item
    $\mathcal{F}''$ is equal to $Q \times \{e\}$.
    \end{itemize}

    Now, assume $m$ is of the form~\myquot{$\le e$} for some $e >0$.
    Here, we use the same transition structure as in the previous case, but runs do not have to reach~$Q \times \{e\}$ to be accepting (capturing \myquot{exactly $e$ matches of $(\phi, I, m)$}), but can stop early. Formally, we define $\auta\otimes (o(\phi, I, m)) = (Q'', Q_0', \Sigma, C \cup \{time\}, \Delta'', \mathcal{F}'' )$ where $Q''$, and $\Delta''$ are as in the previous case, but $\mathcal{F}''$ is equal to $\mathcal{F'} \cup (Q \times \{1,2,\ldots, e\})$, which captures \myquot{at most $e$ matches of $(\phi, I, m)$}.









    Finally, assume $m$ is of the form~\myquot{$\ge 0$}. Here, we intuitively extend $\auta\otimes\obs$ by runs of $\auta$ of arbitrary (potentially zero) length by intuitively concatenating  $\auta\otimes\obs$ (at the accepting locations) and a copy of $\auta$ with $\varepsilon$-transitions: from an accepting location of the form~$(q,s)$ of $\auta\otimes\obs$ an $\varepsilon$-edge leads to the copy of $q$ of $\auta$. In the copy, we again have to satisfy the requirements spelled out by $\phi$ and $I$. 
    
    Since we do not allow $\varepsilon$-transitions in timed automata, we instead define $\auta \otimes(\obs(\phi, I, m))$ to be the result of the classical construction removing such transitions: for each transition of $\auta$ leading from $q$ to $q'$ and each accepting~$(q,s)$ in $\auta\otimes\obs$, we add a transition from $(q,s)$ to the copy of $q'$.
    Every location in the copy is accepting. Furthermore, the accepting locations of $\auta\otimes\obs$ are also accepting in $\auta\otimes(\obs(\phi, I, m))$ to allow for zero matches of $(\phi, I, m)$). 

    Thus, we define $\auta\otimes (\obs(\phi, I, m)) = (Q'', Q_0'', \Sigma, C \cup \{time\}, \Delta'', \mathcal{F}'' )$ where 
    \begin{itemize}
    \item
    $Q'' = Q' \cup (Q \times \{1\})$ (recall that we assume w.l.o.g., that $Q' \subseteq Q \times S$ with $S \cap \nats = \emptyset$, which implies that the union is disjoint), 
    \item $Q_0'' = Q_0'$,
    \item $\Delta''$ contains
    \begin{itemize}
        \item all transitions from $\Delta'$,
        \item all transitions of the form~$( (q,s),(q',1),a,\lambda,g \wedge g_{(q',a,d)} \wedge time \in I)$ for $(q,q',a,\lambda,g) \in \Delta$ with $(q,s) \in \mathcal{F}'$, and $d \in D_{(q,a)}$ and
        \item all transitions of the form~$( (q,1), (q', 1), a, \lambda, g \wedge g_{(q',a,d)} \wedge time \in I)$ for $(q,q',a,\lambda,g) \in \Delta$ and $d \in D_{(q,a)}$, and
    \end{itemize}
    \item $\mathcal{F}''$ is equal to $\mathcal{F}' \cup (Q \times \{1\})$.
\end{itemize}

In all three cases, the invariants are satisfied after renaming the states. 
Now, an induction shows that the language of $\auta\otimes\obs$ is equal to $\conswords{\auta}{\obs}$ and another induction shows that the number locations of $\auta\otimes\obs$ is bounded by $n \cdot e^*$, where $n$ is the number of locations of $\auta$ and $e^*$ is the sum of the constants in the multiplicities (where each multiplicity of the form~$\ge 0$ contributes $1$ to the sum instead of $0$).
Note that $e^*$ is the number of copies of $\auta$ used to construct~$\auta \otimes\obs$.
Furthermore, $\auta\otimes\obs$ uses only one more clock than $\auta$.
However, the number of transitions may be exponential, as we turn arbitrary Boolean formulas (the $\phi$ in the observations) into disjunctive normal form, which can incur an exponential blowup.
The number of transitions of $\auta\times\obs$ in turn depends on the size of those formulas in disjunctive normal form.
Finally, note that the size of the bounds used in the guards of $\auta\times\obs$ depends on the bounds in $\auta$ and the bounds in $\obs$.
\end{proof}

\section{A Zone-Based Monitoring Algorithm}
\label{sec:impl}

In this section, we present an algorithm computing the monitoring function~$\assumpevalfuncsymbol$. To this end, we first need to introduce some notation for \tba and zones to represent subsets of states of \tba, which may be uncountable.

In Subsection~\ref{sec:zones} we introduce zones to symbolically represent sets of states of timed automata.
Then in Subsection~\ref{sec:characterizing} we present a characterization of the monitoring function using reachability sets in timed automata.
These reachability sets are shown to be computable in Subsection~\ref{sec:computing} using a zone-based algorithm.
This is the basis of our monitoring algorithm which is presented in Subsection~\ref{sec:algorithm}.

\subsection{Zones}\label{sec:zones}
For the monitoring algorithm, we use -- as is standard in analysing timed automata models -- symbolic states being pairs~$(q, Z)$ of locations and zones. A zone is a finite conjunction of constraints of the form~$x \sim t$ and $x - x' \sim t$ for clocks~$x,x'$, constants $t \in \nnrats$, and $\sim\ \in \{<, \leq, =, \geq, >\}$. We write $v \models Z$ to denote that the clock valuation $v$ satisfies all constraints of the zone $Z$. Furthermore, we interchangeably treat zones as conjunctions of clock constraints as well as sets of clock valuations that satisfy its constraints.
Given two zones $Z$ and $Z'$ over a set $C$ of clocks, and a set~$\lambda \subseteq C$ of clocks, we define the following operations on zones (which can be efficiently implemented using the DBM data-structure \cite{DBLP:conf/ac/BengtssonY03}):
\begin{itemize}
    \item $Z[\lambda] = \{v \mid \exists v' \models Z \: \text{ such that }  v(x) = 0 \text{ if } x \in \lambda \text{, otherwise } v(x) = v'(x) \}$
    \item $Z^\nearrow = \{v \mid \exists v' \models Z \text{ such that }  v = v' + d  \text{ for some } d \in \nnreals\}$
    \item $Z \land Z' = \{v \mid v \models Z \textit{ and } v \models Z'\}$.
\end{itemize}

To describe our algorithm, we first define the set of states of a \tba from where it is possible to reach an accepting location infinitely many times in the future, i.e., those states from which an accepting run is possible.
This is useful, because if processing a finite timed word leads to such a state, then the timed word can be extended to an infinite one in the language of the automaton, a notion that underlies Definition~\ref{def:assumpmonitor}.
Given a \tba $\aut = (Q, Q_0, \Sigma, C, \Delta, \mathcal{F})$, the set of states with nonempty language is
\[\nonempty{\aut} = \{ (q,v) \mid q \in Q, v \in C \rightarrow \nnreals \text{ such that } \lang{\aut, (q,v)} \neq \emptyset \}.\]

\begin{proposition}[\cite{GrosenKLZ22}]
$\nonempty{\aut}$ is a finite union of zones and can be computed using a zone-based algorithm.
\end{proposition}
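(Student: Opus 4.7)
The plan is to compute $\nonempty{\aut}$ by a two-phase zone-based fixpoint that mirrors the classical Büchi emptiness test for timed automata. The underlying characterization is that $(q,v) \in \nonempty{\aut}$ iff there exists an accepting location $q_f \in \mathcal{F}$ and a valuation $v_f$ such that (i) $(q_f,v_f)$ is reachable from $(q,v)$ by a finite run of $\aut$, and (ii) $(q_f,v_f)$ is reachable from itself by a nonempty run; any such lasso unfolds into an infinite accepting run, and conversely any infinite accepting run induces such a lasso by pigeonhole on the finite symbolic zone graph.

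First, I would build the forward symbolic reachability graph of $\aut$ using the operators $Z\wedge Z'$, $Z^\nearrow$, and $Z[\lambda]$ from the DBM machinery~\cite{DBLP:conf/ac/BengtssonY03}, together with a standard extrapolation (e.g.\ $k$-normalisation with respect to the largest constant in the guards) to guarantee finiteness. On this finite graph, a standard nested depth-first search identifies the set $R$ of symbolic states $(q_f, Z)$ with $q_f \in \mathcal{F}$ that lie on a genuine accepting cycle. Second, I would compute the backward symbolic closure of $R$ by iterating the symbolic predecessor operation (the dual of the forward post, also implementable on DBMs) until a fixpoint is reached. The outcome is a finite collection of symbolic states $(q, Z)$, and $\nonempty{\aut}$ is precisely the union of their concretizations $\{(q,v) \mid v \models Z\}$.

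Correctness reduces to two standard arguments. For soundness, any accepting symbolic lasso can be instantiated from any concrete valuation in its starting zone by choosing time delays within the (convex) zones along the cycle, and then repeating the cycle to produce an infinite accepting run. For completeness, any infinite accepting run visits some symbolic accepting state infinitely often by finiteness of the extrapolated symbolic graph, and the resulting symbolic cycle together with the symbolic prefix from $(q,v)$ is a witness included in the computed set.

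The main obstacle is the well-known subtlety that a cycle in the forward zone graph need not represent a concrete cyclic run, because $Z^\nearrow$ inflates valuations under time elapse and extrapolation further coarsens zones. Handling this requires careful use of guard-based extrapolation together with the nested depth-first search that only accepts cycles admitting a concrete instantiation, rather than naive cycle detection on the symbolic graph; this is the core technical content hidden inside the otherwise natural two-phase fixpoint.
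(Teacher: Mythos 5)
The paper does not prove this proposition itself; it imports it from~\cite{GrosenKLZ22}, so your attempt has to be judged on its own merits. Your overall algorithmic skeleton (forward symbolic exploration with extrapolation, detection of accepting symbolic cycles, backward closure to collect all states that can reach such a cycle) is the standard and essentially correct route, and you rightly flag the subtlety that a cycle in the extrapolated zone graph need not be a concrete cyclic run. However, there are two genuine gaps. First, the ``underlying characterization'' you build on is false as stated: it is not true that $(q,v) \in \nonempty{\aut}$ iff some concrete accepting state $(q_f,v_f)$ is reachable from $(q,v)$ and from itself. Take a single accepting location with a self-loop that resets a clock $y$ and has guard $y>0$ while a second clock $x$ is never reset: every transition forces a strictly positive delay, so $x$ increases strictly forever and no concrete state is ever revisited, yet the language from the initial state is nonempty. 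The lasso characterization only holds at the level of regions or zones, which is also where your completeness argument (pigeonhole on the finite symbolic graph) actually lives; the concrete ``iff'' should be dropped and replaced by the symbolic one throughout.

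Second, you never address time divergence. The paper defines infinite timed words to satisfy $\limsup \tau = \infty$, so $\lang{\aut,(q,v)}$ contains only non-Zeno words, and $\nonempty{\aut}$ must exclude states whose only accepting behaviours are Zeno. An accepting symbolic cycle found by your nested DFS may be realizable only with delays summing to a finite value (e.g., all guards force the total elapsed time to stay below a bound), in which case ``repeating the cycle'' does not produce a word in $\TSigma^\omega$ at all. The standard repair is to additionally require the cycle to be progressive --- for instance by adding an auxiliary clock that must be reset and later exceed $1$ on the cycle, or by restricting to cycles in which some clock is both reset and bounded from below, or unbounded. Without this check your computed set strictly over-approximates $\nonempty{\aut}$, which would break the monitor's verdicts (it could report $\unknown$ where $\top$ or $\bot$ is warranted, or fail to report $\outofmodel$). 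Both gaps are repairable, but as written the proof does not establish the proposition.
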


\subsection{Characterizing Monitoring via Reachability Sets}\label{sec:characterizing}
We continue by capturing the set of states of a \tba that can be reached by processing a finite timed word.
Let $\aut = (Q, Q_0, \Sigma, C, \Delta, \mathcal{F})$ be a \tba.
In the following definition, we write $(q_0, v_0) \xrightarrow{\rho}_\aut (q_n, v_n)$ for a finite timed word~$\rho = (\sigma_1, \tau_1) \cdots (\sigma_n, \tau_n) \in \TSigma^*$ to denote the existence of a finite sequence
\[(q_0, v_0) \transition{1} (q_1,v_1) \transition{2} \cdots \transition{n} (q_n,v_n) \]
 of states,
where for all $1 \leq i \leq n$ there is a transition $(q_{i-1},q_{i},\sigma_{i},\lambda_i,g_i)$ such that $v_{i}(c) = 0$ for all $c$ in $\lambda_i$ and $v_{i-1}(c) + (\tau_i - \tau_{i-1})$ otherwise, and $g$ is satisfied by the valuation $v_{i-1}+(\tau_{i} - \tau_{i-1})$, where we use $\tau_0 = 0$. 
Given a TBA~$\aut$, a finite timed word~$\rho \in \TSigma^*$, and a time-point~$t \in \nnreals$ with $t \ge \tau(\rho)$, the set of possible states a run over $\rho$ starting from initial states of $\aut$ can end in after time~$t$ has passed is
\[
\terminal{\aut}{\rho,t} = \bigcup\nolimits_{q_0 \in Q_0} \{(q, v + (t-\tau(\rho))) \mid (q_0, v_0) \xrightarrow{\rho}_\aut (q, v) \},
\]
where $v_0$ is the clock valuation mapping every clock to $0$.
We call $\terminal{\aut}{\rho,t}$ the reach-set of $\aut$ over $(\rho, t)$.
The above definition is adapted from \cite{GrosenKLZ22} to take into account the time that has passed since the last observation, i.e., the value~$t - \tau(\rho)$. 

Next, we lift this definition to sets~$L \subseteq \TSigma^*$ of finite words via
\[
\terminal{\aut}{L,t} = \bigcup\nolimits_{\rho \in L} \terminal{\aut}{\rho,t},
\]
assuming $t \ge \tau(L)$. 
Otherwise, $\terminal{\aut}{L,t}= \emptyset$ by convention.

First, we show that reach-sets and states with a nonempty language allow us to characterize monitoring. Then, we show how to compute reach-sets of observations.

In the following lemma, $\conswords{\auta}{\obs}$ is the observation~$O$ in Definition~\ref{def:assumpmonitor} and $L(\auta)$ is the assumption~$A$.
Furthermore, the assumption on the property~$\varphi$ in the lemma is satisfied for all \mitl-definable properties.

\begin{lemma}
\label{lemma_moniviareachsets}
Let $\varphi \subseteq \TSigma^\omega$ be a property such that there are \tba's~$\aut_\varphi$ and $\aut_{\neg \varphi}$ with $L(\aut_\varphi) = \varphi$ and $L(\aut_{\neg \varphi}) = \TSigma^\omega \setminus \varphi$, let $\obs = (\phi_1,I_1,m_1)\cdots (\phi_n,I_n,m_n)$ be an $\auta$-observation for some \tba~$\auta$, and let $t \ge \sup I_n$.
\begin{enumerate}
    
    \item \label{lemma_moniviareachsets_assump} 
    $\conswords{\auta}{\obs} \cdot_t \TSigma^\omega \cap L(\auta) \neq \emptyset$
    if and only if $\terminal{\auta}{\conswords{\auta}{\obs},t} \cap \nonempty{\auta} \neq \emptyset$.
    
    \item \label{lemma_moniviareachsets_prop} $\conswords{\auta}{\obs} \cdot_t \TSigma^\omega \cap L(\auta)\subseteq \varphi$ if and only if $\terminal{\aut_{\neg \varphi} \otimes \auta}{\conswords{\auta}{\obs}, t} \cap \nonempty{\aut_{\neg \varphi} \otimes \auta} =\emptyset$.
   
    \item \label{lemma_moniviareachsets_negprop}  $\conswords{\auta}{\obs} \cdot_t \TSigma^\omega\cap L(\auta) \subseteq \TSigma^\omega \setminus \varphi$ if and only if $\terminal{\aut_{\varphi} \otimes \auta}{\conswords{\auta}{\obs}, t} \cap \nonempty{\aut_{ \varphi} \otimes \auta} =\emptyset$.
\end{enumerate}
\end{lemma}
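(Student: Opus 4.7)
The plan is to prove all three parts by reducing them to a common underlying fact: a finite timed word $\rho$ processed by a \tba $\aut$ can be extended at time $t$ to an infinite word in $\lang{\aut}$ if and only if some run prefix of $\aut$ on $\rho$ ends in a state $(q,v)$ such that the advanced state $(q, v + (t - \tau(\rho)))$ lies in $\nonempty{\aut}$. This is essentially a splitting/gluing principle for accepting runs at a cut-point time $t$.

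For Part~\ref{lemma_moniviareachsets_assump}, the forward direction starts from an infinite word of the form $\rho \cdot_t \rho' \in \lang{\auta}$ with $\rho \in \conswords{\auta}{\obs}$. Any accepting run of $\auta$ on $\rho \cdot_t \rho'$ decomposes into a finite prefix processing $\rho$ and ending in some state $(q,v)$, followed by an accepting suffix processing $\rho'$ starting from $(q, v + (t - \tau(\rho)))$, since the first event of $\rho'$ occurs at time $\ge t$ and clocks age accordingly during the delay. By definition of the reach-set, $(q, v + (t - \tau(\rho))) \in \terminal{\auta}{\rho,t} \subseteq \terminal{\auta}{\conswords{\auta}{\obs},t}$, and it lies in $\nonempty{\auta}$ by existence of the accepting suffix. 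The converse direction simply reverses this construction: pick a witness state together with the run prefix producing it and an accepting continuation run, then glue them via $\cdot_t$.

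For Parts~\ref{lemma_moniviareachsets_prop} and \ref{lemma_moniviareachsets_negprop}, I would first rewrite the inclusion on the left-hand side as emptiness of an intersection: $\conswords{\auta}{\obs} \cdot_t \TSigma^\omega \cap L(\auta) \subseteq L(\varphi)$ is equivalent to $\conswords{\auta}{\obs} \cdot_t \TSigma^\omega \cap L(\auta) \cap (\TSigma^\omega \setminus L(\varphi)) = \emptyset$. Using Proposition~\ref{prop:intersection} and $L(\aut_{\neg \varphi}) = \TSigma^\omega \setminus L(\varphi)$, this collapses to $\conswords{\auta}{\obs} \cdot_t \TSigma^\omega \cap L(\aut_{\neg \varphi} \otimes \auta) = \emptyset$. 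Now I apply the contrapositive of the equivalence from Part~\ref{lemma_moniviareachsets_assump}, replacing $\auta$ by the product automaton $\aut_{\neg \varphi} \otimes \auta$, to obtain the stated reach-set characterization. Part~\ref{lemma_moniviareachsets_negprop} is entirely symmetric, replacing $\aut_{\neg \varphi}$ by $\aut_\varphi$.

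The main obstacle I anticipate is the clean justification of the splitting/gluing step in Part~\ref{lemma_moniviareachsets_assump}: one must verify that the advancement $v \mapsto v + (t - \tau(\rho))$ exactly captures the passage of time between the last event of $\rho$ (occurring at time $\tau(\rho)$) and the first event of $\rho'$ (occurring at time $\ge t$ after the concatenation), and that the transition semantics of \tba make the pre-split run prefix together with the advanced post-split accepting run into a valid accepting run of $\auta$ on $\rho \cdot_t \rho'$. The hypothesis $t \ge \sup I_n \ge \tau(\rho)$ guarantees that $\cdot_t$ is well defined on every $\rho \in \conswords{\auta}{\obs}$ and that $t - \tau(\rho) \ge 0$, so the advancement is legal. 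Once this is carefully established, the reduction to Parts~\ref{lemma_moniviareachsets_prop} and \ref{lemma_moniviareachsets_negprop} is routine propositional reasoning combined with the product construction.
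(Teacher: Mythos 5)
Your proposal is correct and follows essentially the same route as the paper: Part~\ref{lemma_moniviareachsets_assump} by splitting an accepting run at the cut time~$t$ (advancing the clock valuation by $t - \tau(\rho)$) and gluing in the converse direction, and Parts~\ref{lemma_moniviareachsets_prop} and \ref{lemma_moniviareachsets_negprop} by rewriting the inclusion as emptiness of an intersection with $L(\aut_{\neg\varphi})$ resp.\ $L(\aut_\varphi)$ and rerunning the Part~\ref{lemma_moniviareachsets_assump} argument on the product automaton. One small caution: ``replacing $\auta$ by the product'' should not be read as a literal substitution in the statement of Part~\ref{lemma_moniviareachsets_assump}, since $\conswords{\auta}{\obs}$ must stay interpreted over $\auta$ (the observation refers to $\auta$'s locations and clocks); your opening formulation of the splitting/gluing fact for an arbitrary \tba and an arbitrary finite word already gives the correct general statement to instantiate, which is exactly what the paper does as well.
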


\begin{proof}
\ref{lemma_moniviareachsets_assump}.)
For the left-to-right implication, let $\rho=(\sigma_1, \tau_1)(\sigma_2, \tau_2) \cdots \in \conswords{\auta}{\obs} \cdot_t \TSigma^\omega \cap L(\auta)$.
Due to $\rho \in \conswords{\auta}{\obs} \cdot_t \TSigma^\omega$, there is an $n\ge 0$ such that $\rho' = (\sigma_1, \tau_1) \cdots (\sigma_n, \tau_n) \in \conswords{\auta}{\obs}$, $\tau_n \le t$, and $\tau_{n+1} \ge t$.
On the other hand, due to $\rho \in L(\auta)$, there is an accepting run
\[(q_0, v_0) \transition{1} (q_1,v_1) \transition{2} (q_2,v_2) \transition{3} \cdots \]
of $\auta$.
Hence, we have $(q_0, v_0) \xrightarrow{\rho'}_\auta (q_n, v_n)$.
Thus, $(q_n, v_n + \delta) \in \terminal{\auta}{\rho',t} \subseteq \terminal{\auta}{\conswords{\auta}{\obs},t}$, where we define $\delta = t - \tau_n$.
It remains to show $(q_n, v_n + \delta) \in \nonempty{\auta}$.
To this end, consider the run suffix
\[(q_n, v_n) \transition{n+1} (q_{n+1},v_{n+2}) \transition{n+2} (q_{n+2},v_{n+2}) \transition{n+3} \cdots,\] which is accepting. 
Note that $\tau_{n} \le t \le \tau_{n+1}$ implies $\tau_{n+1} - \delta \ge \tau_n \ge 0$.
Hence, $\tau_{n+1} \le \tau_{n+2} \le \cdots$ implies $\tau_{n+j} - \delta \ge 0$ for all $j \ge 1$.
Thus, the run
\[(q_n, v_n + \delta) \transitiondeltam{n+1} (q_{n+1},v_{n+2}) \transitiondeltam{n+2} (q_{n+2},v_{n+2}) \transitiondeltam{n+3} \cdots \]
is well-defined.
As it starts in $(q_n, v_n+ \delta)$ and is accepting, we conclude $(q_n, v_n + \delta) \in \nonempty{\auta}$ as required.

For the right-to-left implication, let $(q, v) \in \terminal{\auta}{\conswords{\auta}{\obs},t} \cap \nonempty{\auta}$. 
Due to $(q, v) \in \terminal{\auta}{\conswords{\auta}{\obs},t}$ there is a $\rho' = (\sigma_1, \tau_1) \cdots (\sigma_n, \tau_n) \in \conswords{\auta}{\obs}$ with $ \tau_n \le t$ and there is a run prefix
\[(q_0, v_0) \transition{1} (q_1,v_1) \transition{2} \cdots \transition{n} (q_n,v_n) \]
for some $q_0 \in Q_0$, where $v_0$ is the clock valuation mapping every clock to zero, such that $v = v_n + \delta$, where we define $\delta = (t - \tau_n)$.
Furthermore, due to $(q, v) \in \nonempty{\auta}$, there is a $\rho = (\sigma_{n+1},\tau_{n+1})(\sigma_{n+2},\tau_{n+2})\cdots$ and an accepting run
\[(q_n', v_n') \transition{n+1} (q_{n+1}',v_{n+2}') \transition{n+2} (q_{n+2}',v_{n+2}') \transition{n+3} \cdots \]
starting in $(q,v)$ that processes $\rho$.

These two runs can be combined into the accepting run
\begin{align*}
&(q_0, v_0) \transition{1} (q_1,v_1) \transition{2} \cdots \transition{n} (q_n,v_n)\transition{n+1}\\
&(q_{n+1}',v_{n+2}') \transitiondelta{n+2} (q_{n+2}',v_{n+2}') \transitiondelta{n+3} \cdots
\end{align*}
processing $\rho' \cdot_t  \rho$, which is therefore in $\conswords{\auta}{\obs} \cdot_t \TSigma^\omega \cap L(\auta)$.

\ref{lemma_moniviareachsets_prop}.)
We show that the negations of both statements are equivalent. 

So, let $\conswords{\auta}{\obs} \cdot_t \TSigma^\omega \cap L(\auta)\not\subseteq \varphi$,
i.e., $\conswords{\auta}{\obs} \cdot_t \TSigma^\omega \cap L(\auta)\cap  (\TSigma^\omega\setminus \varphi) \neq \emptyset$.
Using arguments as in the left-to-right implication of Item~\ref{lemma_moniviareachsets_assump}., we can find a state~$(q,v) \in \terminal{\aut_{\neg \varphi} \otimes \auta}{\conswords{\auta}{\obs}, t} \cap \nonempty{\aut_{\neg \varphi} \otimes \auta}$, which is thus not empty.

Now, let $\terminal{\aut_{\neg \varphi} \otimes \auta}{\conswords{\auta}{\obs}, t} \cap \nonempty{\aut_{\neg \varphi} \otimes \auta}  \neq \emptyset$.
Using arguments as in the right-to-left implication of Item~\ref{lemma_moniviareachsets_assump}., we can find a word of the form~$\rho' \cdot_t \rho$ for $\rho' \in \conswords{\auta}{\obs}$ and $\rho \in \TSigma^\omega$ such that $\rho' \cdot_t \rho \in L(\auta) \cap L(\aut_{\neg\varphi})$.
Hence, $\rho' \cdot_t \rho$ is in $\conswords{\auta}{\obs} \cdot_t \TSigma^\omega \cap L(\auta)$, but not in $\varphi$.
Thus, $\conswords{\auta}{\obs} \cdot_t \TSigma^\omega \cap L(\auta)\not\subseteq \varphi$.

\ref{lemma_moniviareachsets_negprop}.) The reasoning is dual to Item~\ref{lemma_moniviareachsets_prop}, we just have to replace~$\varphi$ by $\TSigma^\omega \setminus \varphi$ and $\aut_{\neg\varphi}$ by $\aut_\varphi$.
\end{proof}

\subsection{Computing Reachability Sets}\label{sec:computing}
We now show how to compute reach-sets using zones.
First, we use the zone operations introduced above to compute the successor states of an input letter with a given target location. Fix a \tba $ (Q, Q_0, \Sigma, C, \Delta, \mathcal{F})$. For a symbolic state~$(q, Z)$, a letter $\sigma \in \Sigma$ and target location $q'\in Q$, we define \[\post((q, Z), \sigma, q') = \{ (q', Z') \mid (q,q',\sigma,\lambda,g) \in \Delta, Z' = (Z^\nearrow \land g)[\lambda] \},\] being the set of states one can reach by taking a $\sigma$-transition at some point in the future from $(q, Z)$ with $q'$ as target-location. 
Using $\post$ we can compute the successor states of a time-uncertain letter/target location $(\sigma, q', I)$, where $\sigma\in\Sigma$, $q'\in Q$  and $I\subseteq\nnreals$ is a time interval with rational endpoints.  For this, we extend zones with an additional clock $time$ just recording time since system start. The successors of a symbolic state $(q,Z)$ are 
\[\succc((q, Z), (\sigma, q', I)) = \{(q', Z') \mid (q', Z'') \!\in\! \post((q, Z), \sigma, q'), Z'\!\! =\! Z'' \land time\!\in\! I\}\] 
and the successors of a set~$S$ of symbolic states are
\[\succc(S, (\sigma, q', I)) = \bigcup\nolimits_{(q, Z)\in S} \succc((q, Z), (\sigma, q', I)).\]
Now, our main technical lemma below exploits the above to compute reach-sets.  More precisely, given an $\auta$-observation~$\obs$ (i.e., the \tba~$\auta$ represents the assumption), we can compute the reach-set of the set~$\conswords{\auta}{\obs}$ in the product~$\aut \otimes\auta$, for any given \tba~$\aut$, i.e., we compute the words consistent with the observation~$\obs$ in the \tba~$\auta$ (the assumption), while the reach-set of that language is computed in $\aut \otimes \auta$ (this will later be the product of the property (or its negation) and the assumption as in Lemma~\ref{lemma_moniviareachsets}).

\begin{lemma}
\label{lemma:reachset}
Fix \tba~$\auta, \aut$.
There is a zone-based online algorithm computing
\[ \terminal{\aut\otimes \auta}{\conswords{\auta}{(\phi_1,I_1,m_1)\cdots (\phi_n,I_n,m_n)},t}\] (which is a finite union of zones)
for every $\auta$-observation~$(\phi_1,I_1,m_1)\cdots (\phi_n,I_n,m_n)$, and every $t \in \nnrats$ with $t \ge \sup I_n$.
\end{lemma}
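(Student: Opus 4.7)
The plan is to perform a zone-based online forward-reachability analysis on the product~$\aut \otimes \auta$ while processing the observation elements one at a time, implicitly combining the automaton construction from Lemma~\ref{lemma_obsemptinesstest} with standard zone-based reachability so as to avoid explicitly building the observation-tracking automaton. Throughout, symbolic states take the form~$(q, Z)$, where $q$ is a location of~$\aut \otimes \auta$ and $Z$ is a zone over the clocks of~$\aut$, the clocks of~$\auta$, and an auxiliary clock~$time$ that is never reset and thus records elapsed global time. The initial set~$S_0$ consists of all such symbolic states with $q$ ranging over the initial locations of~$\aut \otimes \auta$ and $Z_0$ setting every clock to~$0$.

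For each observation element~$(\phi_j, I_j, m_j)$, I first decompose~$\phi_j$ exactly as in the proof of Lemma~\ref{lemma_obsemptinesstest}: for every letter~$\sigma \in \Sigma$ and every~$\auta$-location~$q'$, substitute truth values into~$\phi_j$ for the propositions in~$\Sigma \cup Q$, then bring the remaining formula (over clock constraints of~$\auta$) into disjunctive normal form to obtain $\bigvee_{d} g_{(\sigma,q',d)}$. This yields a single-step successor operation on symbolic states of~$\aut \otimes \auta$: apply $\succc$ in the product for every choice of letter and target, additionally intersecting with~$g_{(\sigma, q', d)}$ (referring to the $\auta$-component of the target) and with~$time \in I_j$, and take the union over all such choices. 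The single-step operation is then iterated according to~$m_j$: exactly~$e$ times for~$= e$; the union of the results of $0, 1, \ldots, e$ iterations for~$\le e$; and iterated to fixpoint for~$\ge 0$. After all observation elements have been processed, I apply the time-elapse operator~$Z^\nearrow$ and intersect with~$time = t$ to advance the clock valuations to the current time~$t$, yielding the desired reach-set.

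Correctness is by induction on the length of the observation, mirroring the inductive construction of Lemma~\ref{lemma_obsemptinesstest} but performed directly on~$\aut \otimes \auta$ and factored through the zone-based $\post$, $\succc$, and time-elapse operations, whose soundness and completeness with respect to concrete reachability are standard. The main technical obstacle is the fixpoint computation for~$\ge 0$ multiplicities, whose termination is not immediate since zones may a priori grow without bound; this is secured by the standard $k$-normalization (maximal-constant abstraction) used in zone-based model checking and implemented in UPPAAL, where $k$ is chosen larger than every constant appearing in~$\aut$, $\auta$, the clock constraints occurring in the~$\phi_j$, and the interval endpoints of the~$I_j$. This guarantees that only finitely many distinct symbolic states arise during the fixpoint while preserving the exact reach-set, so the algorithm terminates and returns the claimed value.
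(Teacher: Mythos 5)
Your proposal is correct and follows essentially the same route as the paper's proof: induction over the observation elements, decomposition of each $\phi_j$ into a disjunction of (letter, location, clock-constraint) conjuncts, a $\succc$-based single-step successor intersected with the guard and $time \in I_j$, and multiplicities handled by bounded iteration ($=e$, $\le e$) or a least fixed point ($\ge 0$). The only differences are that you are more explicit about the final time-elapse to the current time~$t$ and about securing termination of the $\ge 0$ fixed point via normalization, whereas the paper argues termination directly from the boundedness of~$I_j$ (which keeps the never-reset clock~$time$, and hence all clocks, bounded, so only finitely many zones arise); both justifications are sound.
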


\begin{proof}
Recall that the state set of $\aut\otimes \auta$ has the form~$Q_\aut \times Q_\auta \times \{0,1\}$, where $Q_\aut$ and $Q_\auta$ are the sets of states of $\aut$ and $\auta$, respectively (see Proposition~\ref{prop:intersection}).

Let $\obs^i= (\phi_1,I_1,m_1)\cdots (\phi_i,I_i,m_i)$, and let us denote by 
$S^i_{\aut\otimes \auta}$  the reach-set $\terminal{\aut\otimes \auta}{\conswords{\auta}{\obs^i}}$ of $\obs^i$ in $\aut\otimes \auta$.
We will show inductively in $i$, that $S^i_{\aut\otimes \auta}$ can be computed using zone operations. 
For the base case $i=0$, we note that $\conswords{\auta}{\obs^0}=\{\varepsilon\}$, thus $\terminal{\aut\otimes \auta}{\conswords{\auta}{\obs^0}}$ is the set of initial states of $\aut\otimes\auta$, which is clearly effectively representable using zones. 

For the inductive case, let us assume that $S^{i-1}_{\aut\otimes \auta}$  is computable using zone operations. Now consider $(\phi_i,I_i,m_i)$. Given that $\Sigma$, $Q_\auta$ and $Q_\aut$ are finite, $\phi_i$ is equivalent to a finite disjunction of simple formulas of the form~$\psi_{i,j}=\sigma_{i,j}\wedge q^a_{i,j} \wedge g_{i,j}$, where $\sigma_{i,j}\in\Sigma$, $q^a_{i,j}\in Q_\auta$, and $g_{i,j}\in G'(C_\auta)$. 

Now in case $m_i$ is equal to $= \multbound$ for some $\multbound >0$, the reach-set with respect to  $\psi_{i,j}$ is simply $S^{i,\multbound}_{\aut\otimes \auta}$ where 
$S^{i,0}_{\aut\otimes \auta}$ is $S^{i-1}_{\aut\otimes \auta}$ and
\[S^{i,\multbound'}_{\aut\otimes \auta} = \bigcup\nolimits_j \bigcup\nolimits_{q^b \in Q_\aut}\bigcup\nolimits_{k \in \{0,1\}}
\succc(S^{i,\multbound'-1}_{\aut\otimes \auta}, (\sigma_{i,j},(q^b,q^a_{i,j},k), I_i))\wedge g_{i,j}\]
for $0 < \multbound' \le \multbound $.
Here, given a set~$S$ of symbolic states, we write $S\wedge g_{i,j}$ for the set of symbolic states of the form~$(q, Z \wedge q_{i,j})$ where $(q, Z)$ is in $S$.

In case $m_i$ is equal to $\ge 0$, $S^i_{\aut\otimes \auta}$ is the least fixed-point $\mathbf{X}$ satisfying the equality
\[
\mathbf{X} = S^{i-1}_{\aut\otimes \auta}\cup \bigcup\nolimits_j \bigcup\nolimits_{q^b \in Q_\aut}\bigcup\nolimits_{k \in \{0,1\}}\big[\succc(\mathbf{X}, (\sigma_{i,j},(q^b,q^a_{i,j},k), I_i))\wedge g_{i,j}\big].
\]
Given the upper bounds of the interval $I_i$, the least fixed-point will be found in a finite number of iterations of the right-hand-side of the above equation (starting from the empty set).

Finally, in case $m_i$ is equal to $\le \multbound$ for some $\multbound >0$, $S^i_{\aut\otimes \auta}$ is obtained by unraveling the fixed point above exactly~$\multbound$ times.

The above inductive proof  provides in an obvious manner the basis for an online construction of the sets $\terminal{\aut\otimes \auta}{\conswords{\auta}{\obs^i}}$.
\end{proof}

\begin{remark}
    For the special case of $\aut = \auta$ of Lemma~\ref{lemma:reachset}, we do not have to work on the product~$\auta \times \auta$, but can simplify the construction to work on $\auta$ directly. Thus, there is also a zone-based algorithm computing 
    \[ \terminal{\auta}{\conswords{\auta}{(\phi_1,I_1,m_1)\cdots (\phi_n,I_n,m_n)},t}\]
(which is again a finite union of zones)
for every $\auta$-observation~$(\phi_1,I_1,m_1)\cdots (\phi_n,I_n,m_n)$, and every $t \in \nnrats$ with $t \ge \sup I_n$.
\end{remark}

\subsection{Monitoring Algorithm}\label{sec:algorithm}
Now, we are able to present our algorithm to compute~$\assumpevalfuncsymbol(\varphi, A)$ for a property~$\varphi \subseteq \TSigma^\omega$ (given by two \tba~$\aut_\varphi$ and $\aut_{\neg\varphi}$ such that $L(\aut_\varphi) = \varphi$ and $L(\aut_{\neg \varphi}) = \TSigma^\omega \setminus \varphi$) and an assumption~$A$ (given by a \tba~$\auta$):
Given $\obs = (\phi_1,I_1,m_1)\cdots (\phi_n,I_n,m_n)$ (an observation) and $t \ge \sup I_n$, do the following:
\begin{enumerate}

    \item Compute~$\terminal{\auta}{\conswords{\auta}{\obs}, t}$. If it has an empty intersection with $\nonempty{\auta}$, then return~$\outofmodel$. This checks whether there is some some finite word that is consistent with the observation and can be extended to satisfy the assumption. If this is not the case, then the assumption was wrong.
    
    \item Compute~$\terminal{\aut_{\neg \varphi} \otimes \auta}{\conswords{\auta}{\obs}, t}$. If it has an empty intersection with $\nonempty{\aut_{\neg \varphi} \otimes \auta}$, then return~$\top$: 
    If there is a finite word consistent with the observation that can be extended to satisfy the assumption, but no such extension satisfies the complement of the property, then every such extension must satisfy the property. Hence, we can return~$\top$.

    \item Compute~$\terminal{\aut_\varphi \otimes \auta}{\conswords{\auta}{\obs}, t}$. If it has an empty intersection with $\nonempty{\aut_\varphi \otimes \auta}$, then return~$\bot$:    
    If there is a finite word consistent with the observation that can be extended to satisfy the assumption, but no such extension satisfies the property, then every such extension must satisfy the complement of the property. Hence, we can return~$\bot$.
    
    \item Return~$\unknown$. Otherwise, there is both a finite word that is consistent with the observation that can be extended to satisfy the property and a finite word that is consistent with the observation that can be extended to satisfy the complement of the property. Consequently, we return~$\unknown$.
\end{enumerate}

\begin{theorem}
    The algorithm described above can be implemented using zones and computes~$\assumpevalfuncsymbol(\varphi,A)$.
\end{theorem}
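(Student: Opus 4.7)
The plan is to verify that the algorithm's four branches correspond exactly to the four cases of Definition~\ref{def:assumpmonitor} and that every set-theoretic test it performs is realizable by zone operations supplied by earlier results.

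For correctness, I proceed by case analysis on which branch is taken. If the first test succeeds, so $\terminal{\auta}{\conswords{\auta}{\obs}, t} \cap \nonempty{\auta} = \emptyset$, then Lemma~\ref{lemma_moniviareachsets}(\ref{lemma_moniviareachsets_assump}) yields $\conswords{\auta}{\obs} \cdot_t \TSigma^\omega \cap L(\auta) = \emptyset$, matching the $\outofmodel$ case of the definition. If we reach the second test, this intersection is nonempty; if additionally the second test succeeds, Lemma~\ref{lemma_moniviareachsets}(\ref{lemma_moniviareachsets_prop}) gives $\conswords{\auta}{\obs} \cdot_t \TSigma^\omega \cap L(\auta) \subseteq L(\varphi)$, and together with nonemptiness this matches the $\top$ case. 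The third branch is symmetric via Lemma~\ref{lemma_moniviareachsets}(\ref{lemma_moniviareachsets_negprop}). If all three tests fail, we simultaneously obtain $\conswords{\auta}{\obs} \cdot_t \TSigma^\omega \cap L(\auta) \neq \emptyset$ together with non-containment in both $L(\varphi)$ and $\TSigma^\omega \setminus L(\varphi)$, which is precisely the \emph{otherwise} case in Definition~\ref{def:assumpmonitor}, producing $\unknown$.

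For implementability, each reach-set of the form $\terminal{\aut \otimes \auta}{\conswords{\auta}{\obs}, t}$, as well as $\terminal{\auta}{\conswords{\auta}{\obs}, t}$ itself, is computable via zone operations by Lemma~\ref{lemma:reachset} together with the remark following it. Each relevant set $\nonempty{\cdot}$ is computable as a zone-based object by the proposition on $\nonempty{\aut}$ stated earlier from \cite{GrosenKLZ22}. The intersection of two zone-representable sets is again zone-representable via the $\wedge$ operation on the underlying DBMs, and testing emptiness of such a set is a standard primitive, so each of the three branches reduces to an effective zone-emptiness check.

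The main subtlety, rather than a real obstacle, is the sequential structure of the tests: to derive $\top$ (respectively $\bot$) from the second (respectively third) branch, one needs the non-emptiness premise $\conswords{\auta}{\obs} \cdot_t \TSigma^\omega \cap L(\auta) \neq \emptyset$ which appears explicitly in Definition~\ref{def:assumpmonitor} but only implicitly in the pure subset characterizations of Lemma~\ref{lemma_moniviareachsets}(\ref{lemma_moniviareachsets_prop}, \ref{lemma_moniviareachsets_negprop}). This premise is secured by performing the $\outofmodel$ test first and falling through only when it fails, so the ordering of the three tests is essential; once this is observed, the correspondence with the four verdicts is complete.
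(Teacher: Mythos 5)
Your proposal is correct and follows essentially the same route as the paper, whose proof is a one-line appeal to Lemma~\ref{lemma_moniviareachsets}, Lemma~\ref{lemma:reachset}, and the effectiveness of emptiness checks on intersections of symbolic state sets; you simply spell out the case analysis and the role of the test ordering that the paper leaves implicit.
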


\begin{proof}
    This follows immediately from Lemma~\ref{lemma_moniviareachsets}, Lemma~\ref{lemma:reachset}, and the fact that non-emptiness of intersections of sets of symbolic states can be solved algorithmically~\cite{dp}.
\end{proof}

Note that our algorithm is online in the following sense: The set of non\-empty states only needs to be computed once for each of the three automata and the symbolic states capturing \[\terminal{\auta}{\conswords{\auta}{(\phi_1,I_1,m_1)\cdots (\phi_n,I_n,m_n)(\phi_{n+1},I_{n+1},m_{n+1})}, t'}\] can be computed from the symbolic states capturing
\[\terminal{\auta}{\conswords{\auta}{(\phi_1,I_1,m_1)\cdots (\phi_n,I_n,m_n)}, t},\]
as evident from the proof of Lemma~\ref{lemma:reachset}.
The same is true for the reach-sets in the other two automata~$\aut_{\neg \varphi} \otimes \auta$ and $\aut_{ \varphi} \otimes \auta$.

\begin{remark}
Our algorithm requires \tba both for the property \emph{and} its complement, but \tba are in general not closed under complementation~\cite{alur1994tba}.
For the important case of \mitl properties, such automata always exist, as \mitl is closed under negation and can be translated into equivalent \tba (Theorem~\ref{thm:mtltba}).
In general, one needs both automata for the property and the complement of the property for monitoring of real-time properties~\cite{concur25}.
\end{remark}

Note that online monitoring as described above is not a decision problem per se. 
Instead, our algorithm continually updates reach-sets and intersects them with precomputed sets of states with non-empty language to determine verdicts.
Furthermore, the number of zones making up the reach-sets can grow exponentially in the number of observations, even if all events are fully observable. 
Also, the number of zones constituting the reach-sets depends not only on the automata for the property being monitored, but also on the timestamps appearing in the observations. 

Hence, we need an experimental evaluation to determine the true efficiency of our algorithm. To this end, we report in the next section on a prototype implementation, with a particular focus on the time it takes to compute the verdict after a new observation and the growth rate of the zones constituting the reach-sets. 

\section{Evaluation}
\label{sec:eval}
\begin{filecontents*}{with-assumption10.csv}
Observation, pos, neg, unk
4, 0, 0, 1000
5, 74, 0, 926
6, 125, 0, 801
7, 167, 14, 620
8, 151, 48, 421
9, 138, 78, 205
10, 86, 119, 0
\end{filecontents*}

\begin{filecontents*}{without-assumption10.csv}
Observation, pos, neg, unk
4, 0, 0, 1000
5, 0, 0, 1000
6, 0, 0, 1000
7, 0, 12, 988
8, 0, 45, 943
9, 0, 84, 859
10, 752, 107, 0
\end{filecontents*}

\begin{filecontents*}{with-assumption100.csv}
Observation, pos, neg, unk
73, 0, 0, 1000
74, 1, 0, 999
75, 2, 0, 997
76, 2, 0, 995
77, 4, 1, 990
78, 4, 1, 985
79, 5, 1, 979
80, 4, 3, 972
81, 7, 3, 962
82, 11, 4, 947
83, 7, 7, 933
84, 7, 10, 916
85, 15, 7, 894
86, 18, 10, 866
87, 14, 17, 835
88, 28, 13, 794
89, 20, 20, 754
90, 24, 20, 710
91, 29, 32, 649
92, 35, 19, 595
93, 36, 34, 525
94, 41, 33, 451
95, 34, 29, 388
96, 42, 35, 311
97, 30, 27, 254
98, 50, 40, 164
99, 42, 37, 85
100, 43, 42, 0
\end{filecontents*}

\begin{filecontents*}{without-assumption100.csv}
Observation, pos, neg, unk
76, 0, 0, 1000
75, 0, 1, 999
76, 0, 1, 998
77, 0, 2, 996
78, 0, 3, 993
79, 0, 0, 993
80, 0, 1, 992
81, 0, 5, 987
82, 0, 4, 983
83, 0, 6, 977
84, 0, 8, 969
85, 0, 8, 961
86, 0, 7, 954
87, 0, 6, 948
88, 0, 7, 941
89, 0, 19, 922
90, 0, 16, 906
91, 0, 24, 882
92, 0, 19, 863
93, 0, 27, 836
94, 0, 33, 803
95, 0, 23, 780
96, 0, 25, 755
97, 0, 22, 733
98, 0, 43, 690
99, 0, 49, 641
100, 593, 48, 0
\end{filecontents*}

\begin{filecontents*}{resp_time_steps1.csv}
a, b
1, 12003
2, 5769
3, 4862
4, 5149
5, 4833
6, 5768
7, 5419
8, 5420
9, 4909
10, 4583
11, 4740
12, 5354
13, 5380
14, 4862
15, 4641
16, 5656
17, 5329
18, 4820
19, 5464
20, 7242
21, 11198
22, 15495
23, 18497
24, 22557
25, 27693
26, 30723
27, 35231
28, 39641
29, 43074
30, 47061
31, 115926
32, 55263
33, 60282
34, 73121
35, 69070
36, 74204
37, 75493
38, 80323
39, 85486
40, 89838
41, 36866
42, 6043
43, 5740
44, 6050
45, 5882
46, 6612
47, 6238
48, 5746
49, 6218
50, 5799
51, 5494
52, 5621
53, 5657
54, 6260
55, 5997
56, 6483
57, 6061
58, 5509
59, 5622
60, 6246
61, 9467
62, 14645
63, 18068
64, 21633
65, 25058
66, 29290
67, 32967
68, 37764
69, 42298
70, 45756
71, 49597
72, 53478
73, 57111
74, 61217
75, 65757
76, 69862
77, 73768
78, 77658
79, 82620
80, 86281
81, 36908
82, 5839
83, 5348
84, 5303
85, 5258
86, 5182
87, 5057
88, 5349
89, 6336
90, 5115
91, 5749
92, 5761
93, 5905
94, 5357
95, 5961
96, 5720
97, 5160
98, 5708
99, 5225
100, 51809
\end{filecontents*}

\begin{filecontents*}{resp_time_steps.csv}
a, b
1, 16824
2, 7578
3, 5609
4, 6079
5, 6516
6, 5471
7, 5485
8, 6197
9, 4829
10, 5963
11, 4798
12, 5919
13, 5332
14, 5665
15, 5368
16, 5478
17, 5336
18, 6799
19, 5788
20, 6307
21, 6238
22, 5258
23, 8422
24, 9038
25, 6846
26, 6502
27, 5948
28, 6176
29, 6740
30, 5868
31, 5656
32, 5977
33, 6260
34, 5462
35, 5646
36, 6253
37, 5047
38, 5144
39, 6252
40, 5862
41, 6796
42, 6366
43, 5452
44, 5728
45, 5715
46, 6984
47, 5145
48, 6995
49, 5560
50, 6319
51, 5931
52, 6126
53, 6933
54, 4992
55, 6267
56, 7335
57, 24478
58, 16780
59, 8805
60, 5586
61, 7388
62, 6177
63, 6514
64, 7171
65, 6355
66, 6762
67, 6604
68, 6859
69, 5590
70, 5535
71, 6557
72, 6624
73, 5558
74, 5909
75, 5944
76, 6550
77, 6319
78, 5888
79, 14816
80, 6179
81, 5594
82, 6199
83, 5951
84, 5880
85, 5946
86, 5672
87, 6630
88, 7097
89, 6512
90, 5627
91, 6051
92, 6079
93, 5947
94, 5841
95, 9198
96, 5882
97, 20474
98, 5253
99, 5471
100, 5589
101, 5686
102, 6085
103, 6521
104, 7473
105, 6664
106, 9571
107, 21462
108, 5708
109, 6609
110, 5665
111, 6027
112, 6179
113, 5469
114, 5604
115, 5436
116, 5968
117, 5428
118, 5455
119, 6196
120, 6670
121, 6166
122, 5924
123, 6206
124, 6047
125, 5859
126, 5394
127, 6248
128, 6730
129, 6194
130, 5581
131, 5941
132, 5743
133, 6159
134, 6056
135, 5919
136, 6397
137, 5753
138, 5923
139, 6335
140, 6745
141, 6488
142, 5544
143, 5530
144, 6717
145, 5682
146, 5703
147, 5935
148, 6102
149, 5883
150, 5733
151, 6844
152, 7373
153, 5968
154, 6727
155, 5760
156, 5651
157, 5676
158, 5234
159, 5624
160, 6330
161, 5739
162, 5408
163, 5658
164, 6484
165, 5418
166, 5589
167, 5632
168, 6123
169, 6168
170, 5269
171, 5832
172, 5471
173, 6068
174, 5693
175, 6705
176, 5762
177, 6087
178, 5552
179, 5714
180, 6084
181, 5885
182, 5378
183, 6297
184, 6926
185, 6175
186, 6645
187, 5890
188, 5517
189, 16370
190, 5718
191, 6115
192, 6512
193, 5999
194, 5835
195, 5756
196, 5550
197, 5589
198, 5808
199, 6441
200, 6729
201, 12797
202, 15700
203, 19043
204, 22955
205, 27384
206, 58884
207, 36617
208, 42637
209, 45861
210, 48601
211, 56712
212, 56674
213, 62288
214, 65482
215, 69826
216, 74091
217, 77247
218, 81013
219, 139597
220, 90029
221, 96586
222, 96939
223, 102469
224, 105673
225, 111154
226, 115683
227, 120059
228, 125117
229, 129242
230, 132925
231, 142060
232, 146641
233, 153250
234, 155162
235, 191885
236, 162998
237, 167768
238, 173095
239, 179533
240, 183671
241, 188500
242, 193024
243, 202162
244, 199441
245, 205935
246, 240207
247, 220389
248, 224985
249, 225472
250, 227671
251, 234756
252, 236244
253, 244442
254, 250472
255, 287196
256, 256793
257, 264940
258, 268766
259, 272946
260, 271969
261, 277411
262, 310263
263, 286906
264, 291525
265, 299910
266, 304461
267, 331736
268, 313221
269, 393609
270, 325883
271, 330773
272, 339084
273, 335261
274, 336658
275, 399263
276, 352344
277, 354717
278, 354430
279, 368787
280, 414353
281, 371248
282, 374567
283, 390803
284, 396053
285, 435613
286, 403886
287, 414076
288, 416367
289, 416236
290, 452757
291, 432894
292, 430733
293, 428545
294, 436261
295, 457471
296, 399888
297, 415246
298, 460153
299, 417399
300, 472084
301, 483188
302, 453522
303, 447714
304, 746592
305, 695744
306, 535823
307, 567247
308, 518134
309, 516949
310, 520558
311, 498227
312, 480241
313, 479239
314, 485234
315, 526179
316, 543472
317, 481053
318, 498015
319, 536419
320, 515553
321, 530892
322, 534565
323, 607164
324, 582997
325, 525698
326, 544787
327, 641436
328, 561077
329, 562997
330, 678221
331, 569873
332, 623523
333, 607105
334, 591357
335, 592003
336, 589348
337, 684958
338, 586348
339, 608357
340, 650926
341, 606698
342, 619081
343, 730016
344, 620364
345, 627532
346, 665387
347, 633545
348, 628432
349, 648070
350, 761062
351, 715112
352, 684001
353, 688221
354, 718043
355, 861542
356, 1088404
357, 838454
358, 839280
359, 782062
360, 768763
361, 715351
362, 739401
363, 754073
364, 709764
365, 804458
366, 700602
367, 725007
368, 766152
369, 756515
370, 825889
371, 759265
372, 722199
373, 776292
374, 804830
375, 896219
376, 810218
377, 764326
378, 898371
379, 874802
380, 843591
381, 824967
382, 788430
383, 811601
384, 795890
385, 864721
386, 838644
387, 832173
388, 855701
389, 822958
390, 905508
391, 874418
392, 875853
393, 1046098
394, 937419
395, 884847
396, 824232
397, 926113
398, 997370
399, 873672
400, 903270
401, 264764
402, 7830
403, 6766
404, 7223
405, 6312
406, 6555
407, 9278
408, 8530
409, 7743
410, 8717
411, 7261
412, 7678
413, 7248
414, 6601
415, 7039
416, 8247
417, 7520
418, 6931
419, 7505
420, 6467
421, 7324
422, 7213
423, 6873
424, 9070
425, 7178
426, 7874
427, 7244
428, 7009
429, 7196
430, 5941
431, 6306
432, 8036
433, 6293
434, 6545
435, 7301
436, 6236
437, 6021
438, 6283
439, 7037
440, 8987
441, 7051
442, 6460
443, 7886
444, 7996
445, 6562
446, 6202
447, 6544
448, 8167
449, 6762
450, 5906
451, 6900
452, 6451
453, 6030
454, 6178
455, 6642
456, 8971
457, 7747
458, 7275
459, 6942
460, 7007
461, 5839
462, 7136
463, 7104
464, 7103
465, 6978
466, 6041
467, 6259
468, 6670
469, 6527
470, 6321
471, 7917
472, 8215
473, 7698
474, 6242
475, 6234
476, 6553
477, 6571
478, 5881
479, 7307
480, 7951
481, 6554
482, 6176
483, 6507
484, 6058
485, 6701
486, 6252
487, 6598
488, 9158
489, 7098
490, 6667
491, 6513
492, 7191
493, 6688
494, 6505
495, 6611
496, 7563
497, 6872
498, 6580
499, 7197
500, 6673
501, 6402
502, 6745
503, 6950
504, 7689
505, 6496
506, 7693
507, 7329
508, 8045
509, 8964
510, 27865
511, 8219
512, 9053
513, 7915
514, 6642
515, 6688
516, 7377
517, 7188
518, 5386
519, 6905
520, 8051
521, 7538
522, 6387
523, 6072
524, 6600
525, 6602
526, 6734
527, 6428
528, 7288
529, 7016
530, 6072
531, 6545
532, 6732
533, 6277
534, 5196
535, 8889
536, 10103
537, 6233
538, 6575
539, 6094
540, 6590
541, 7019
542, 6599
543, 6824
544, 6582
545, 7290
546, 6136
547, 6650
548, 6525
549, 6967
550, 6460
551, 7240
552, 7931
553, 8039
554, 7234
555, 6595
556, 6704
557, 7290
558, 5961
559, 6512
560, 7793
561, 6028
562, 6173
563, 6210
564, 6472
565, 5977
566, 5749
567, 6933
568, 19234
569, 5935
570, 6022
571, 5650
572, 7395
573, 6426
574, 5553
575, 7225
576, 7237
577, 7061
578, 6435
579, 6366
580, 6147
581, 5982
582, 5855
583, 5967
584, 8103
585, 6175
586, 6646
587, 6889
588, 6913
589, 6376
590, 6495
591, 6413
592, 7193
593, 6242
594, 6822
595, 6484
596, 6658
597, 5835
598, 5489
599, 6764
600, 7109
601, 11440
602, 15398
603, 18885
604, 23522
605, 26332
606, 30211
607, 33871
608, 39025
609, 52833
610, 47364
611, 54725
612, 55291
613, 59042
614, 63190
615, 68155
616, 71412
617, 75351
618, 77964
619, 83755
620, 87216
621, 90185
622, 95451
623, 147782
624, 104023
625, 107677
626, 110025
627, 116593
628, 119981
629, 125508
630, 126884
631, 130693
632, 152471
633, 144465
634, 145944
635, 153594
636, 159365
637, 159744
638, 199896
639, 171280
640, 175048
641, 177351
642, 181474
643, 201363
644, 191768
645, 197114
646, 199532
647, 208150
648, 213876
649, 257679
650, 224283
651, 216138
652, 201027
653, 205604
654, 207230
655, 213351
656, 221415
657, 223973
658, 263901
659, 234751
660, 239801
661, 260381
662, 248466
663, 286847
664, 260272
665, 263250
666, 294752
667, 266171
668, 271675
669, 276856
670, 273144
671, 278927
672, 286179
673, 328181
674, 295071
675, 320530
676, 326296
677, 317310
678, 309149
679, 350343
680, 324314
681, 325737
682, 326585
683, 331830
684, 337568
685, 373069
686, 340610
687, 355514
688, 349151
689, 346586
690, 344155
691, 389925
692, 353502
693, 361320
694, 362291
695, 368314
696, 372340
697, 394296
698, 410193
699, 380641
700, 407331
701, 479986
702, 684864
703, 436352
704, 410354
705, 407662
706, 494583
707, 434529
708, 425909
709, 447118
710, 469505
711, 438568
712, 437638
713, 448081
714, 506177
715, 553819
716, 514313
717, 547642
718, 471148
719, 531342
720, 481368
721, 469696
722, 479009
723, 582121
724, 491684
725, 551228
726, 502479
727, 545821
728, 551452
729, 523684
730, 606543
731, 545631
732, 524968
733, 542563
734, 586324
735, 538072
736, 535299
737, 606475
738, 626757
739, 573097
740, 557859
741, 649166
742, 583378
743, 576489
744, 634224
745, 629440
746, 590127
747, 593489
748, 636009
749, 603885
750, 664420
751, 742351
752, 696641
753, 619064
754, 662678
755, 652514
756, 730181
757, 1078481
758, 745801
759, 776572
760, 732833
761, 746819
762, 709945
763, 714784
764, 756654
765, 792265
766, 772967
767, 798971
768, 788302
769, 777646
770, 852765
771, 813993
772, 789153
773, 717663
774, 796647
775, 852313
776, 737470
777, 768921
778, 767552
779, 823975
780, 871027
781, 729949
782, 771277
783, 785271
784, 845048
785, 889514
786, 871805
787, 783844
788, 825207
789, 796330
790, 832662
791, 790146
792, 797311
793, 845770
794, 849779
795, 1214141
796, 928447
797, 868304
798, 830705
799, 839440
800, 896510
801, 313782
802, 7208
803, 6581
804, 7974
805, 6804
806, 5887
807, 6877
808, 7167
809, 6611
810, 7093
811, 6923
812, 7521
813, 7319
814, 6617
815, 7243
816, 7649
817, 7858
818, 6428
819, 7870
820, 6327
821, 5866
822, 6280
823, 8110
824, 7693
825, 6540
826, 6755
827, 6117
828, 7159
829, 6711
830, 5522
831, 6739
832, 9180
833, 8595
834, 7840
835, 7690
836, 7376
837, 6119
838, 7194
839, 7331
840, 8078
841, 7644
842, 7041
843, 6423
844, 7131
845, 7032
846, 5992
847, 6477
848, 8040
849, 7009
850, 6695
851, 6215
852, 6665
853, 6437
854, 6537
855, 8553
856, 9998
857, 7349
858, 7534
859, 6530
860, 6831
861, 6896
862, 6928
863, 8078
864, 8162
865, 7669
866, 7029
867, 7304
868, 7567
869, 6451
870, 6798
871, 7064
872, 9012
873, 7787
874, 7902
875, 7102
876, 6662
877, 6947
878, 6439
879, 5996
880, 6992
881, 6756
882, 6501
883, 6292
884, 6974
885, 6779
886, 5531
887, 7433
888, 8319
889, 6443
890, 6759
891, 7105
892, 7062
893, 7276
894, 5975
895, 6447
896, 7850
897, 8698
898, 6866
899, 5931
900, 7053
901, 7019
902, 5466
903, 7603
904, 8026
905, 8004
906, 7949
907, 31085
908, 21884
909, 7920
910, 6995
911, 6800
912, 7182
913, 7296
914, 6669
915, 6852
916, 6265
917, 6403
918, 5690
919, 8141
920, 9494
921, 6650
922, 7160
923, 6368
924, 6835
925, 7134
926, 6741
927, 7410
928, 7912
929, 7609
930, 6104
931, 6267
932, 6553
933, 21228
934, 5724
935, 6673
936, 7912
937, 6653
938, 6352
939, 7543
940, 6735
941, 7023
\end{filecontents*}

We implemented our assumption-based online monitoring algorithm described in Section~\ref{sec:impl} by extending the \textsc{UPPAAL} tool component~\textsc{MoniTAal}\footnote{\url{https://github.com/DEIS-Tools/MoniTAal}}, thereby demonstrating how the use of assumptions and unobservable events can enhance monitoring capabilities. 
The tool can be utilized as a \texttt{C++} library to monitor properties described by \tba.
It supports intersection, so that given a \tba for the property, one for the negation of the property, as well as one for the assumption, one can create monitors for the assumption, and for the intersection of the assumption and the two property automata, respectively. The tool can then be used to update the reach-set of each automaton (triggered by (possibly inexact) observations) and detect if there is a non-empty intersection with the non-empty language states.
In the following, we report on three proof-of-concept cases, that each demonstrate unique features of monitoring with assumptions.
\begin{itemize}
    \item Task Sequence: This example demonstrates how utilizing assumptions can result in earlier verdicts. Furthermore, it shows the effect of unobservable events on the response time, the time between receiving an event and outputting a verdict. 
    \item Conveyor Belt: This example shows that it is possible to monitor properties of unobservable propositions using assumptions. 
    \item Jobshop Scheduling: This example demonstrates how the size of the automata (and the amount of nondeterministic choices in them) affects the monitoring response time and the number of zones constituting the reach-sets.
\end{itemize}

\subsection{Task Sequence}
We first experiment with a system under monitoring that produces a finite sequence of events $a_1,\ldots,a_k$.
Each $a_i$, with $1\leq i<k$, is followed by $a_{i+1}$ with a time within the interval $[l_i,u_i]$.
The assumption is formalized by the \tba shown
in Figure~\ref{fig:bounded-response-assumption}.
The domain is parameterized on $k$, and the $l_i$ and $u_i$. Further, depending on the experiment, not all the $a_i$ will be observable.
We consider the bounded response property~$G(a_1\rightarrow F_{[0,B]}a_k)$. 
Suppose that we observe a timed word~$(a_1,t_1)\cdots(a_k,t_k)$. If for some $j$ in the range~$1< j\leq k$, we have
$t_j+\sum_{j\leq i< k}u_i\leq B + t_1$, then the verdict at time
$t_j$ is $\top$. On the other hand, if $t_j+\sum_{j\leq
  i< k}l_i> B + t_1$, then the verdict at time $t_j$ is
$\bot$. As a corner case, if $\sum_{1\leq i< k}u_i\leq B$
or $\sum_{1\leq i< k}l_i> B$, the verdict is respectively $\top$
and $\bot$ at time~$0$, since all words of the assumption
respectively satisfy and violate the property.
We run several experiments to show the effect of the assumption and study the scalability under a sequence of unobservable events.

\begin{figure}[t]
    \centering

    \begin{tikzpicture} [ initial text={},thick,>=stealth]
      \node[state, initial] (q0)   at (0,0) {$q_0$};
      \node[state] (q1) at (2,0) {$q_1$};
      \node[state] (q2) at (4.5,0)  {$q_2$};
      \node (dots) at (5.75,0) {$\cdots$};
      \node[state,inner sep =0] (q3) at (7,0)  {$q_{k-1}$};
      \node[state,accepting] (q4)  at (10.7,0) {$q_k$};
      
\draw[->, thick] (q2) edge (dots);
     \draw[->, thick] (dots) edge (q3);

     \draw[->, thick] (q0) edge node[above] {$a_1$} node[below]{$x:=0$} (q1);
     \draw[->, thick] (q1) edge node[above] {$a_2$} node[below,align=center]{$x\in [l_1,u_1]$\\$ x:=0$} (q2);
    \draw[->, thick] (q3) edge node[above] {$a_k$} node[below,align=center]{$x\in [l_{k-1},u_{k-1}]$\\$ x:=0$} (q4);
     \draw[->,loop above] (q4) edge node[above] {$\$ $ } ();     

\end{tikzpicture}

    \caption{A \tba representing the assumption for the bounded
      response example.}
    \label{fig:bounded-response-assumption}
\end{figure}

First we show how unobservable events can affect the response time, the time between receiving an event and outputting a verdict. We pick~$k=100$ and $l_i = 50$ and $u_i = 100$ for all $i$. The events $\{a_i \mid i \in \{21,\ldots, 40\} \cup \{ 61,\ldots, 80\} \}$ are unobservable within the interval $[0, 10000]$. 
In Figure~\ref{fig:varying-resp-time} we see that for each consecutive unobservable event, the response time grows linearly. This is due to the reach-set growing. Nevertheless, the reach-set shrinks when an observable event is received. The minimum response time is 5 microseconds ($\mu$s), the maximum is 116 $\mu$s and the average is 25 $\mu$s.
For reference, if we monitor 5000 consecutive unobservable events, the maximum response time is 32 milliseconds.

\begin{figure}[t]
  \centering
      \begin{tikzpicture}
      \begin{axis}[
        height=6cm,
        xtick={0,20,40,60,80,100},
        xlabel={Observation index},
        ylabel={Response time (ns)}
      ]
      \addplot table [x=a, y=b, col sep=comma] {resp_time_steps1.csv};
      \end{axis}
    \end{tikzpicture}
\caption{Response time of the monitoring implementation, when monitoring the task sequence example with $\{a_i \mid i \in \{21,\ldots, 40\} \cup \{ 61,\ldots, 80\} \}$ being unobservable.}
 \label{fig:varying-resp-time}
\end{figure}    

\begin{figure}
\centering
  \captionof{table}{Verdict distribution for monitoring the task sequence example a thousand times with $k = 10$, $B = 675$, $l_i = 50$ and and $u_i = 100$ for all $i$, with and without assumption. 
For the definitive verdicts~$\top$ and $\bot$, the entry in the row labeled $a_i$ specifies the number of new verdicts given after the $i$-th observation; 
for the inconclusive verdict~$\unknown$, the entry specifies how many of the thousand example observations still yield that verdict.
  }
  \begin{tabular}{c >{\centering}m{20pt}>{\centering}m{20pt}>{\centering}m{20pt} >{\centering}m{20pt}>{\centering}b{20pt}>{\centering\arraybackslash}m{20pt}}
       Observation & \multicolumn{6}{c}{Verdicts}\\ \cmidrule(r){2-7}
        & \multicolumn{3}{c}{No Assumption} & \multicolumn{3}{c}{With Assumption}\\
         \cmidrule(r){2-4} \cmidrule(r){5-7}
         & $\top$ & $\bot$ & $\unknown$ & $\top$ & $\bot$ & $\unknown$ \\
         
        \midrule
            $a_{5}$ & 0 & 0 & 1000 & 0 & 0 & 1000\\
      \rowcolor{lightgray!40}  $a_{6}$ & 0 & 0 & 1000 & 0 & 1 & 999\\
        $a_{7}$ & 0 & 0 & 1000 & 17 & 15 & 967\\
      \rowcolor{lightgray!40}  $a_{8}$ & 0 & 0 & 1000 & 81 & 90 & 796\\
        $a_{9}$ & 0 & 33 & 967 & 165 & 153 & 478\\
      \rowcolor{lightgray!40}  $a_{10}$ & 0 & 457 & 510 & 246 & 232 & 0\\
    \end{tabular}
  
  \label{tab:bounded}
\end{figure}

To show the effect of the assumption, we monitor the bounded response property $G(a_1\rightarrow F_{[0,675]}a_{10})$ a thousand times, with and without the assumption where $l_i = 50$ and $u_i = 100$ for all $i$. The observed words are random, but within the assumption.
The results in Table~\ref{tab:bounded} show that verdicts are computed earlier with the assumption than without. Without the assumption the earliest verdicts were in 33 cases $\bot$ after observing $a_9$, while with the assumption we saw a $\top$ or $\bot$ verdict in 522 cases before observing $a_{10}$.

Thus, when monitoring a live system in an online setting (compared to evaluating a log history), a verdict can be reached earlier, because of the restrictions the assumption inhibits. 
Furthermore, we demonstrate in this case how unobservable events can affect the size of the reach-set, as the number of words that are consistent with an observation can increase with the number of consecutive unobservable events. This in turn affects the response time.

\subsection{Conveyor Belt}

This example represents a conveyor belt that moves an item through
different stations, where the item is processed according to some
task. The task in the nominal case takes between $8$ and $10$ times
units. However, if the process is faulty, it finishes earlier and
takes between $7$ and $9$ time units: it may sometimes complete
correctly on time, but it may in other cases stop too early. The fault
can happen at any time and is permanent. Our 
assumption automaton is shown in Figure~\ref{fig:belt-assumption}. Our
monitoring property is simply $G\neg fault$.

\begin{figure}[bt]
    \centering
    \begin{tikzpicture} [node distance = 4.0cm, initial text={},thick, >=stealth]
      \node (q0)  [state, initial, accepting] {$q^n_0$};
      \node (q1)  [state, right = of q0] {$q^n_1$};
      \node (q2)  [state, right = of q1] {$q^n_2$};
      \node (qf0)  [state, accepting, below = 1cm of q0] {$q^f_0$};
      \node (qf1)  [state, right = of qf0] {$q^f_1$};
      \node (qf2)  [state, right = of qf1] {$q^f_2$};
    
     \draw[->, thick] (q0) edge node[above] {$start$} node[below]{$x=1 / x:=0$} (q1);
     \draw[->, thick] (q1) edge node[above] {$stop$} node[below]{$x\in [8,10]/ x:=0$} (q2);
     \draw[->, thick] (q2) edge [bend right=22, above] node[above] {$move$} node[below]{$x=1 / x:=0$} (q0);
     
     \draw[->, thick] (qf0) edge node[above] {$start$} node[below]{$x=1 / x:=0$} (qf1);
     \draw[->, thick] (qf1) edge node[above] {$stop$} node[below]{$x\in [7,9] / x:=0$} (qf2);
     \draw[->, thick] (qf2) edge [bend left=22, below] node[above] {$move$} node[below]{$x=1 / x:=0$} (qf0);
     \draw[->, thick] (q0) edge node[left] {$fault$} (qf0);
     \draw[->, thick] (q1) edge node[left] {$fault$} (qf1);
     \draw[->, thick] (q2) edge node[left] {$fault$} (qf2);

\end{tikzpicture}
    \caption{A \tba representing the assumption for the conveyor
      belt example.} 
    \label{fig:belt-assumption}
\end{figure}

Consider that we observe the events $start$, $stop$, and $move$ with
precise information on the time. If the $stop$ signal happens less
than $8$ time units after $start$, we detect a violation of the
property. If instead $stop$ happens between $8$ and $9$ time units, we
cannot say if there was a fault or not.

Suppose now that we have uncertainty on the time of the
observations like in the following observation sequence: {\small
\begin{align*}(start,[1,1],=1)(fault,[1,11],\ge 0)(stop,[8,10],=1)(fault,[8,11],\ge 0)(move,[9,11],=1)\\(fault,[9,12],\ge 0)(start,[10,12],=1)(fault,[11,22],\ge 0)(stop,[16,18],=1)\end{align*}}
The first $stop$ happens at a time between $8$ and $10$, thus between
$7$ and $9$ time units after the first $start$. This is compatible
with both a nominal (with $stop$ occurring between times $9$ and $10$)
and a faulty execution (with $stop$ occurring between times $8$ and
$9$): after the first stop, we do not know if there was a
fault.

The second $start$ happens in the time interval $[10,12]$ and has the
same uncertainty: it is consistent with the nominal behavior if
$start$ actually occurred in $[11,12]$  and with a faulty behavior if
$start$ occurred in $[10,11]$. 
The second stop happens in the time interval $[16,18]$. Thus, the
difference with the previous start is between $4$ and $8$ time
units. This seems compatible with a nominal delay ($[8,10]$). However,
from the reasoning done above, if there were no fault the second start
would have occurred in the interval $[11,12]$ and the second $stop$
would have occurred in the interval $[19,22]$, which is not compatible
with the observation. Thus we can conclude there was a fault.


\begin{figure}
    \centering
    \begin{tikzpicture}
    \begin{axis}[
        height=0.5\linewidth,
        width=\linewidth,
    	x tick label style={
    		/pgf/number format/1000 sep=10},
        xtick={2, 10, 20, 30, 40, 50, 60, 70, 77},
    	xlabel=Number of observations before a fault occurred,
        ylabel=Number of runs,
    	enlargelimits=0.03,
    	ybar=0pt,
        bar width=1,
    ]
    \addplot coordinates {(2, 93) (3, 73) (4, 56) (5, 99) (6, 55) (7, 36) (8, 78) (9, 40) (10, 35) (11, 51) (12, 36) (13, 20) (14, 41) (15, 26) (16, 13) (17, 26) (18, 21) (19, 14) (20, 20) (21, 15) (22, 5) (23, 24) (24, 9) (25, 6) (26, 13) (27, 7) (28, 6) (29, 19) (30, 7) (31, 2) (32, 10) (33, 5) (34, 2) (35, 4) (36, 3) (37, 2) (38, 1) (39, 2) (40, 1) (41, 2) (42, 1) (43, 2) (44, 0) (45, 0) (46, 0) (47, 2) (48, 1) (49, 3) (50, 2) (51, 1) (52, 3) (53, 1) (54, 0) (55, 0) (56, 1) (57, 1) (58, 0) (59, 0) (60, 0) (61, 1) (62, 0) (63, 0) (64, 0) (65, 1) (66, 0) (67, 0) (68, 0) (69, 0) (70, 0) (71, 1) (72, 0) (73, 0) (74, 0) (75, 0) (76, 0) (77, 1)};
    \end{axis}
    \end{tikzpicture}
    \caption{Histogram of 1000 runs showing how many \textit{start}, \textit{stop}, and \textit{move} events are observed before a fault is detected while monitoring a faulty conveyor belt.}
    \label{fig:conveyor-histogram}
\end{figure}


Using \textsc{MoniTAal}, we monitored the property $G\neg fault$ with the assumption from Figure~\ref{fig:belt-assumption} by simulating a faulty conveyor belt. The observed trace is generated according to the simulated faulty behavior of the assumption, and the monitor observes only \textit{start}, \textit{stop}, and \textit{move} signals, with a time uncertainty of 2. For example, if a \textit{stop} occurs at time \textit{t} in the simulation, then the monitor would observe $(\textit{stop}, [l, u], =1)$ for some randomly selected $l$ and $u$ such that $u - l = 2$ and $l \le t \le u$.
As our observations are generated from the assumption, it is never violated.
Furthermore, for the property~$G \neg \textit{fault}$, a monitor can never give the verdict~$\top$. Thus, the only conclusive verdict we report is $\bot$ i.e., that the property does not hold. 

The histogram in Figure~\ref{fig:conveyor-histogram} shows the distribution of the number of \textit{start}, \textit{stop}, and \textit{move} signals that were observed before a fault was detected for 1000 runs. We see that the fault is often detected relatively early, even though there is a time uncertainty in every observation.
The trace always begins with a \textit{start} signal. Since the constraints on \textit{start} are the same in the faulty behavior, a fault is never detected after just one observation, but a fault was detected after the second observation in 93 cases.
The longest trace had 77 observations before a fault was detected. 

With this example, we see how an assumption makes it possible to monitor properties over unobservable events (recall what we do not observe the signal~$\textit{fault}$ referred to in the property). Without an assumption, reasoning about unobservable behavior would not be possible for such a property.

\subsection{Jobshop Scheduling}
\label{sec:jobshop}
In this example, we test the capability of the assumption-based online
monitoring of scaling with the number of non-deterministic choices that
the monitored system may take.



The assumption is intuitively the product composition of $n+1$
automata depicted in Figure~\ref{fig:jobshop-network}. These automata
represent some processes that, within $n$ time units, must lock a
resource (either $A$ or $B$) and take some time to complete the
task. Thus, a process can use a resource only if no other process is
already using it (this is intuitively represented by the guard $\neg
A$ or $\neg B$). Process $0$ takes $n$ time units to complete, while
the others only $1$ time unit. We monitor the property $F_{[0,n]}
done$, where \textit{done} is the global state in which all tasks are
completed.

\begin{figure}[h]
    \centering
    \begin{tikzpicture} [node distance = 4cm, initial text={},thick, >=stealth]
      \node (q00)  [state, initial] {$p^I_0$};
      \node (q01)  [state, right = of q00, yshift=1.5cm] {$p_0^A$};
      \node (q02)  [state, right = of q00, yshift=-1.5cm] {$p_0^B$};
      \node (q03)  [state, right = of q02, yshift=1.5cm] {$p_0^D$};


      
      \node (qn0)  [state, initial, below = 3.4cm of q00] {$p^I_i$};
      \node (qn1)  [state, right = of qn0, yshift=1.5cm] {$p_i^A$};
      \node (qn2)  [state, right = of qn0, yshift=-1.5cm] {$p_i^B$};
      \node (qn3)  [state, right = of qn2, yshift=1.5cm] {$p_i^D$};

      \draw[->, thick,bend left=15] (q00) edge node[above] {$\tau$}
      node[below,align=center,yshift=-.2cm]{$\neg A\wedge x_0\leq n $\\ $x_0:=0$} (q01);
      \draw[->, thick,bend right=15] (q00) edge node[above] {$\tau_{B0}$}
      node[below,align=center,yshift=-.2cm]{$\neg B\wedge x_0\leq n$ \\ $x_0:=0$} (q02);
      \draw[->, thick,bend left=15] (q01) edge node[above] {$d_0$}
      node[below,align=center,yshift=-.2cm]{$x_0\geq n$} (q03);
      \draw[->, thick,bend right=15] (q02) edge node[above] {$d_0$}
      node[below,align=center,yshift=-.2cm]{$x_0\geq n$} (q03);


      \draw[->, thick,bend left=15] (qn0) edge node[above] {$\tau$}
      node[below,align=center,yshift=-.2cm]{$\neg A\wedge x_i\leq n$ \\ $x_i:=0$} (qn1);
      \draw[->, thick,bend right=15] (qn0) edge node[above] {$\tau_i$}
      node[below,align=center,yshift=-.2cm]{$\neg B\wedge x_i\leq n$ \\ $x_i:=0$} (qn2);
      \draw[->, thick,bend left=15] (qn1) edge node[above] {$d_i$}
      node[below,align=center,yshift=-.2cm]{$x_i\geq 1$} (qn3);
      \draw[->, thick,bend right=15] (qn2) edge node[above] {$d_i$}
      node[below,align=center,yshift=-.2cm]{$x_i\geq 1$} (qn3);





\end{tikzpicture}
    \caption{A network of automata representing the assumption
      automaton of the jobshop scheduling problem. The upper automaton
    represents process 0, while the lower automaton represents process
    $i$, for $1\leq i\leq n$.}
    \label{fig:jobshop-network}
\end{figure}

In practice, the assumption is defined as a single timed automaton, which
intuitively corresponds to the product of the automata depicted in
Figure~\ref{fig:jobshop-network}. This is shown in
Figure~\ref{fig:jobshop2} for the simple case of $n=1$.

More precisely, the assumption automaton has $4^{(n+1)}$ states
represented by a tuple $\langle s_0,s_1,\ldots,s_n\rangle$, where
every $s_i\in \{I,A,B,D\}$, where $I$ stands for idle, $A$ for
processing the task with resource $A$, $B$ for processing with
resource $B$, and $D$ for done (corresponding to the locations~$p_i^I$, $p_i^A$, $p_i^B$, and $p_i^D$ in Figure~\ref{fig:jobshop-network}). Thus, \textit{done} is the state in
which $s_i=D$, for all $i$, while the initial state is the state in
which $s_i=I$ for all $i$. The automaton has $n+1$ clocks $x_i$. The
transitions are defined as follows:
\begin{itemize}
\item
  $\langle s_0,s_1,\ldots,I,\ldots,s_n\rangle
  \labeledtransition{\tau,x_i\leq n,\{x_i\}} \langle
  s_0,s_1,\ldots,A,\ldots,s_n\rangle$ switching $s_i$ from $I$ to $A$
  if $s_j\neq A$ for all $j$, with $0\leq j\leq n$ and $j\neq i$;
\item
  $\langle s_0,s_1,\ldots,I,\ldots,s_n\rangle
  \labeledtransition{\tau,x_i\leq n,\{x_i\}} \langle
  s_0,s_1,\ldots,B,\ldots,s_n\rangle$ switching $s_i$ from $I$ to $B$
  if $s_j\neq B$ for all $j$, with $0\leq j\leq n$ and $j\neq i$;
\item
  $\langle s_0,s_1,\ldots,s_n\rangle
  \labeledtransition{d_0,x_i\geq n} \langle
  D,s_1,\ldots,s_n\rangle$ switching $s_0$ to $D$
  if $s_0\in\{A,B\}$;
\item
  $\langle s_0,s_1,\ldots,s_n\rangle
  \labeledtransition{d_i,x_i\geq 1} \langle
  s_0,s_1,\ldots,D,\ldots,s_n\rangle$ switching $s_i$ to $D$
  if $s_i\in\{A,B\}$ and $1\leq i\leq n$.
\end{itemize}

Let us suppose that $\tau$ is not observable and $d_i$ is observable
for all $i$, $0\leq i\leq n$. Then $F_{[0,n]} done$ is monitorable as
the monitor's output must be $\top$ after observing all $d_i$, one
after the other, in any order, if the last process completes within
$n$ time units. The output must be $\bot$ if after $n$ time units some
$d_i$ has not been observed.

Given the above assumption, we can detect a violation of the property
in advance. In fact, note that the property can be satisfied only if
process $p_0$ starts at time $0$ choosing a resource and the other
processes one after the other, starting at time $0$, complete the task
choosing the other resource. For example, with $n=2$, a satisfying run
would be
\begin{align*}
&(\langle I,I,I\rangle, x_0=0,x_1=0,x_2=0) \labeledtransition{(\tau,0)} 
(\langle I,A,I\rangle, x_0=0,x_1=0,x_2=0) \labeledtransition{(\tau,0)}\\ 
&(\langle B,A,I\rangle, x_0=0,x_1=0,x_2=0) \labeledtransition{(d_1,1)} 
(\langle B,D,I\rangle, x_0=1,x_1=1,x_2=1) \labeledtransition{(\tau,1)} \\
&(\langle B,D,A\rangle, x_0=1,x_1=1,x_2=0) \labeledtransition{(d_2,2)} 
(\langle B,D,D\rangle, x_0=2,x_1=2,x_2=1)\labeledtransition{(d_0,2)} \\
&(\langle D,D,D\rangle, x_0=2,x_1=2,x_2=1)\vphantom{\labeledtransition{(d_0,2)}}.
\end{align*}

Instead, if we observe $d_1$ and $d_2$ to occur before time $2$, from the
assumption, we can deduce that $p_1$ and $p_2$ used different
resources. Thus, $p_0$ cannot complete within time $n$ and the
property will be violated.

\begin{figure}[h]
    \centering\small
    \begin{tikzpicture} [node distance = 1cm, initial text={},thick, >=stealth]

      \node (qII)  [state, initial above] {$\langle I,I\rangle$};

      \node (qAI)  [state, below=of qII, xshift=-3.2cm] {$\langle A,I\rangle$};
      \node (qBI)  [state, right=of qAI] {$\langle B,I\rangle$};
      \node (qIB)  [state, right=of qBI] {$\langle I,B\rangle$};
      \node (qIA)  [state, right=of qIB] {$\langle I,A\rangle$};

      \node (qDI)  [state, below=of qAI] {$\langle D,I\rangle$};
      \node (qAB)  [state, right=of qDI] {$\langle A,B\rangle$};
      \node (qBA)  [state, right=of qAB] {$\langle B,A\rangle$};
      \node (qID)  [state, right=of qBA] {$\langle I,D\rangle$};

      \node (qDB)  [state, below=of qDI] {$\langle D,B\rangle$};
      \node (qDA)  [state, right=of qDB] {$\langle D,A\rangle$};
      \node (qAD)  [state, right=of qDA] {$\langle A,D\rangle$};
      \node (qBD)  [state, right=of qAD] {$\langle B,D\rangle$};

      \node (qDD)  [state, below=of qDA, xshift=1cm] {$\langle D,D\rangle$};
      
      \draw[->, thick] (qII) edge node[left,near start] {$\tau$} (qAI) ;
      \draw[->, thick] (qII) edge node[left,near start] {$\tau$} (qBI);
      \draw[->, thick] (qII) edge node[right,near start] {$\tau$} (qIA);
      \draw[->, thick] (qII) edge node[right,near start] {$\tau$} (qIB);
      
      \draw[->, thick] (qAI) edge node[left,near start] {$d_0$} (qDI);
      \draw[->, thick] (qAI) edge node[left,near start] {$\tau$} (qAB);
      \draw[->, thick] (qBI) edge node[right,near start] {$d_0$} (qDI);
      \draw[->, thick] (qBI) edge node[left,near start] {$\tau$} (qBA);
      \draw[->, thick] (qIB) edge node[right,near start] {$\tau$} (qAB);
      \draw[->, thick] (qIB) edge node[left,near start] {$d_1$} (qID);
      \draw[->, thick] (qIA) edge node[right,near start] {$\tau$} (qBA);
      \draw[->, thick] (qIA) edge node[right,near start] {$d_1$} (qID);

      \draw[->, thick] (qDI) edge node[left,near start] {$\tau$} (qDB);
      \draw[->, thick] (qDI) edge node[left,near start] {$\tau$} (qDA);
      \draw[->, thick] (qAB) edge node[right,near start] {$d_0$} (qDB);
      \draw[->, thick] (qAB) edge node[left,near start] {$d_1$} (qAD);
      \draw[->, thick] (qBA) edge node[right,near start] {$d_0$} (qDA);
      \draw[->, thick] (qBA) edge node[left,near start] {$d_1$} (qBD);
      \draw[->, thick] (qID) edge node[right,near start] {$\tau$} (qAD);
      \draw[->, thick] (qID) edge node[right,near start] {$\tau$} (qBD);

      \draw[->, thick] (qDB) edge node[left,near start] {$d_1$} (qDD);
      \draw[->, thick] (qDA) edge node[left,near start] {$d_1$} (qDD);
      \draw[->, thick] (qAD) edge node[right,near start] {$d_0$} (qDD);
      \draw[->, thick] (qBD) edge node[right,near start] {$d_0$} (qDD);

    \end{tikzpicture}
    \caption{Jobshop assumption for $n=1$. Clocks guards, resets, and unreachable locations are
      omitted to improve readability.}
    \label{fig:jobshop2}
\end{figure}

For the experiments, we will monitor a satisfying observation of the form 
\[(\tau, [0, n], \ge 0), (d_1, [1,1], = 1), (\tau, [0,n], \ge 0), (d_2, [2,2], =1), ..., (d_0, [n,n], = 1),\] where $n+1$ is the number of jobs. Table~\ref{tab:jobshop-bench} shows the results of monitoring such a satisfying observation for 2 to 9 jobs. The size of the assumption is the number of locations, the maximum response time is the longest time it took to compute a verdict after receiving an observation, and the maximum number of symbolic states refers to the size of the representation of the combined reach-sets for the property and assumption.
We see that the increasing size of the assumption slows the response time significantly around 9 concurrent jobs, where the response time goes above a second.

\begin{table}[]
    \centering
        \caption{Experimental results for monitoring a satisfying word of the jobshop scheduling example with 2 - 9 jobs. Response time is the maximal time it takes to compute a verdict after an observation. The number of symbolic states refers to the combined size of the reach-set of the property and assumption automata.}
    \label{tab:jobshop-bench}
    \begin{tabular}{crrr}
        Jobs & Size of Assumption  & Max. resp. Time & Max. \# Symbolic States \\\hline
        2 & 14 & 0.1 ms & 9 \\
       \rowcolor{lightgray!40} 3 & 44 & 0.3 ms & 15 \\
        4 & 128 & 1.1 ms & 21 \\
      \rowcolor{lightgray!40}  5 & 352 & 2.8 ms & 27 \\
        6 & 928 & 9.7 ms & 33 \\
      \rowcolor{lightgray!40}  7 & 2.368 & 38.3 ms & 39 \\
        8 & 5.888 & 198.6 ms & 45 \\
      \rowcolor{lightgray!40}  9 & 14.336 & 1.294.6 ms & 51 \\
        
    \end{tabular}
\end{table}

\section{Related Work}
\label{sec:relatedwork}

Our automata-based monitoring of finite words against specifications over infinite words follows the seminal work of Bauer~et al.~\cite{bauer2006monitoring}, who presented monitoring algorithms for LTL and timed LTL.
Their algorithm for timed LTL is based on clock regions~\cite{alur1994tba}, while we follow the approach of Grosen et al.~\cite{GrosenKLZ22} and use clock zones~\cite{DBLP:conf/ac/BengtssonY03}, whose performance is an order of magnitude faster.
Also, they translated timed LTL into event-clock automata, which are less expressive than the timed Büchi automata (\tba) used both by Grosen et al.~\cite{GrosenKLZ22} and here.
This approach has also been applied to monitoring under delayed observations~\cite{fglz24}.

As our algorithms work with \tba, we also support MITL specifications, as these can be compiled into \tba.
The monitoring problem for MITL has been investigated before.
Baldor~et al.\ showed how to construct a monitor for dense-time MITL formulas by constructing a tree of timed transducers~\cite{baldor2013monitoring}.
Ho et al.\ split unbounded and bounded parts of MITL formulas for monitoring, using traditional LTL monitoring for the unbounded parts and permitting a simpler construction for the (finite-word) bounded parts~\cite{ho2014online}.

There is also a large body of work on monitoring with finite-word semantics.
Roşu et al.\ focussed on discrete-time finite-word
MTL~\cite{rosu2005monitoring}, while Basin et al.\ proposed algorithms
for monitoring real-time finite-word
properties~\cite{basin2012algorithms} and compared different time models. Donz\'{e} et al.~\cite{DonzeFM13}
focussed on monitoring a quantitative semantics for STL, a variant of
MTL with predicates over real-valued signals. Andr\'e et al. consider monitoring finite logs of parameterized timed and hybrid systems \cite{DBLP:journals/tcps/WagaAH22}.
Finally, Ulus et al.\ described monitoring timed regular expressions over finite words using unions of two-dimensional zones~\cite{ulus2014offline,ulus2016online}.

The contribution of this paper is focused on extending the monitoring
of timed properties with assumptions, framing the problem as defined
in~\cite{Tian:2019a,Tian:2019b,CimattiTT21,CimattiTT22} for the discrete-time setting.
Assumptions were first used in \cite{Leucker:2012gl} for extending the
monitoring of LTL with predictive capabilities. 
In \cite{Zhang:2018vv}, the assumption for predictive RV is
computed applying static analysis to the monitored program.
Pinisetty et al. further extend the predictive RV idea to support RV
of timed properties~\cite{Pinisetty:2017jx}, where the \emph{a priori
knowledge} is also expressed as a timed property.
As in~\cite{CimattiTT22}, we
adopt a four-valued semantics for timed properties and we support
partial observability.
Besides the complexity of moving from discrete to dense time
semantics, the ABRV framework is extended with a rich notion of
observations that take into account uncertainty on the time.

The research of partial observability in Discrete-Event Systems is
usually connected with diagnosability \cite{sampath1995diagnosability}
and predictability \cite{Genc:2009gk,Genc:2006jd}.
These notions have been extended to timed systems (see,
e.g., \cite{CassezG13,CassezT13}).
Moreover, they are related to monitorability, an important topic in RV
and other related fields~\cite{Aceto:2019,SistlaZF11,Peled:2019iz},
which has been studied taking into account assumptions
in~\cite{Henzinger:2020ds}. Recently, the monitorability problem for real-time properties has been studied~\cite{concur25,monitorability}.

\section{Conclusion}
\label{sec:conc}

We presented an approach to assumption-based runtime verification for
real-time systems under partial observability. The method builds on
Timed Automata to represent assumptions and uses Metric Interval
Temporal Logic (or Timed Automata) to specify properties. A key element of the approach is
the formalization of observations with both data and time uncertainty,
allowing the monitor to reason about incomplete inputs.
We proved that the monitoring function satisfies the properties of
impartiality and anticipation, and we introduced a partial order on
the verdicts to formalize their refinement. We developed a zone-based
online monitoring algorithm and implemented it on top of
UPPAAL. Experimental results illustrate how assumptions can enable
earlier verdicts and allow monitoring of properties involving
unobservable events.

There are several directions for future development. One is to improve
scalability of the algorithm, especially when the assumption is
specified as a network of automata. Another is to extend the output of
the monitor to include richer diagnostic information. It would also be
useful to investigate an epistemic characterization of the monitor
specification, capturing what can be known from the
observations. Finally, applying the approach in real-world settings
would help assess its practical relevance and identify further
challenges.

\backmatter

\bmhead{Supplementary information}

The tool implementing the algorithms presented here is freely available at \url{https://github.com/DEIS-Tools/MoniTAal}.

\bmhead{Acknowledgments}
T.M.\ Grosen and K.G.\ Larsen have been funded by the Villum Investigator Grant S4OS. T.M.\ Grosen, K.G.\ Larsen, and M.\ Zimmermann
    have been supported by DIREC - Digital Research Centre
    Denmark. A.\ Cimatti and S.\ Tonetta have been supported by the PNRR
    project FAIR - Future AI Research (PE00000013), under the Italian NRRP MUR
    program funded by the NextGenerationEU.

\section*{Declarations}

The authors have no competing interests to declare that are relevant to the content of this article.

\bibliography{biblio}

\end{document}